\DeclareMathAlphabet{\mathpzc}{OT1}{pzc}{m}{it}
\newcommand{\subfiguretitle}[1]{{\scriptsize{#1}} \\}
\newcommand{\R}{\mathbb{R}}                                      
\newcommand{\innerprod}[2]{\left\langle #1,\, #2 \right\rangle}  
\newcommand{\ts}{\hspace*{0.1em}}                                
\newcommand{\mc}[2][]{\mathpzc{#2}{\smash[t]{\mathstrut}}_{#1}}  
\DeclareMathOperator{\diag}{diag}
\DeclareMathOperator{\var}{var}
\DeclareMathOperator{\cov}{cov}
\DeclareMathOperator{\corr}{corr}
\newtheorem{theorem}{Theorem}[section]
\newtheorem{corollary}[theorem]{Corollary}
\newtheorem{lemma}[theorem]{Lemma}
\newtheorem{proposition}[theorem]{Proposition}
\newtheorem{definition}[theorem]{Definition}
\theoremstyle{definition}
\newtheorem{remark}[theorem]{Remark}
\newtheorem{algorithm}[theorem]{Algorithm}
\newcommand\Wtilde{\stackrel{\sim}{\smash{\mc{G}}\rule{0pt}{1ex}}}
\newcommand\Vtilde{\stackrel{\sim}{\smash{\mc{V}}\rule{0pt}{1ex}}}
\newcommand\Etilde{\stackrel{\sim}{\smash{\mc{E}}\rule{0pt}{1ex}}}
\newenvironment{examplewithend}{%
  \pushQED{\qed}%
  \example%
}{%
  \popQED\endexample%
}
\renewcommand*\env@matrix[1][*\c@MaxMatrixCols c]{%
  \hskip -\arraycolsep
  \let\@ifnextchar\new@ifnextchar
  \array{#1}}
\begin{document}

\title{Clustering Time-Evolving Networks \\ Using the Spatio-Temporal Graph Laplacian}
\author[1, a]{Maia Trower}
\author[2]{Nata\v sa Djurdjevac Conrad}
\author[3]{Stefan Klus}
\affil[1]{Maxwell Institute for Mathematical Sciences, University of Edinburgh and Heriot--Watt University, EH8 9BT, UK}
\affil[2]{Zuse Institute Berlin, 14195 Berlin, Germany}
\affil[3]{School of Mathematical \& Computer Sciences, Heriot--Watt University, EH14 4AS, UK}
\affil[a]{Corresponding author: maia.trower@ed.ac.uk}
\date{}

\maketitle

\begin{abstract}
Time-evolving graphs arise frequently when modeling complex dynamical systems such as social networks, traffic flow, and biological processes. Developing techniques to identify and analyze communities in these time-varying graph structures is an important challenge. In this work, we generalize existing spectral clustering algorithms from static to dynamic graphs using \emph{canonical correlation analysis} (CCA) to capture the temporal evolution of clusters. Based on this extended canonical correlation framework, we define the spatio-temporal graph Laplacian and investigate its spectral properties. We connect these concepts to dynamical systems theory via transfer operators, and illustrate the advantages of our method on benchmark graphs by comparison with existing methods. We show that the spatio-temporal graph Laplacian allows for a clear interpretation of cluster structure evolution over time for directed and undirected graphs.
\end{abstract}

\begin{spacing}{1.0}
\footnotesize \bfseries Community detection on static graphs is a well-studied problem, but extending this to time-evolving graphs is known to have significant challenges. This paper draws on transfer operator theory and presents an extension of canonical correlation analysis in order to propose a novel algorithm for detecting evolving cluster structure in dynamic graphs. Our algorithm is naturally able to successfully cluster both directed and undirected graphs and we illustrate the performance on benchmarks exhibiting a range of behaviors, including merging and splitting of clusters and clusters that grow and shrink in size.
\end{spacing}

\section{Introduction}

Graphs represent interactions, relationships, and associations between entities in a wide range of applications, from modeling social networks to transportation and to predicting climate phenomena~\cite{Bedi_2016, Borgatti_2018, Ganin_2017, Donges_2009}. However, much of the existing research on graphs has focused on static graphs, and so many approaches to analysis on dynamic graphs remove the temporal aspect of the graph in order to simplify, which omits crucial information about the structure or relationships. For example, transportation networks may change significantly due to road closures, and accurately modeling traffic flow depends upon having this information. In this work, we propose a novel algorithm for community detection in time-evolving graphs, where an extension of \emph{canonical correlation analysis} is used to leverage the temporal information. Community detection, or clustering, in graphs is a well-established and fundamental concept that partitions graphs into groups of vertices. These groups can be defined by relatively large numbers of intra-group edges, and relatively small numbers of inter-group edges. In the case of directed graphs, groups can also be characterized by sets of vertices with edges connecting the group to another group of vertices. That is, the group of vertices are related to each other because they have common in-links or out-links. Community detection can uncover meaningful relationships between interconnected vertices and can reveal structure across the graph~\cite{Lambiotte_2022}.

Spectral clustering in particular has been well-studied for static graphs \cite{luxburg_2007, Lambiotte_2022}. Recent work has shown that spectral clustering is related to transfer operators defined on graphs~\cite{Djurdjevac12, klus_2022, klus_2024}, and that this transfer operator-based formulation is valid for undirected, directed, and also time-evolving graphs. Building upon previous work showing that the eigenvectors of graph Laplacians can be interpreted as eigenfunctions of associated transfer operators and then used to partition the vertices of a graph~\cite{klus_2024}, we extend this spectral clustering framework to time-evolving graphs by considering the sequence of adjacency matrix snapshots and adapting canonical correlation analysis to extract eigenvectors which contain information relating to cluster structure over the entire time period. These eigenvectors contain information about the cluster structure of a ``flattened" static graph associated with the time-evolving graph, and we show that the flattened graph encodes both spatial and temporal information from the original graph. Crucially, it also retains information on the directionality of edges, and we are able to detect changes in cluster structure. The method in~\cite{klus_2024} is able to identify sets of vertices that are coherent over the whole time interval, but cannot detect when clusters merge, split, or otherwise change. The method proposed in this work addresses this and shows the evolution of clusters.

Clustering time-evolving graphs is known to be a difficult problem. It is not clear how to define clusters when edges evolve over time, and it is also not obvious whether we should cluster static snapshots and then match together static clusters, or find clusters in a way that accounts for the temporal nature of the graph~\cite{rossetti_2018}. In general, the answer to these methodological questions are context-dependent, and there are existing methods in the literature that ascribe to many different perspectives. The simplest approach involves an aggregation of all edges between vertices at any time step and then clustering the resulting static graph using existing methods, but this approach neglects temporal information entirely~\cite{onnela_2007}. However, in general it is not desirable to remove all dynamic information from a time-evolving graph in order to study the communities contained within the graph. To retain temporal information, the main approaches can broadly be categorized as a) identification of communities at the time of each snapshot of the graph, or b) identification of the life cycle of communities throughout the evolution of the graph~\cite{aynaud_2013}.

Approach a) includes \textit{two-stage} methods, which typically detect clusters at time $ t $ and then infer relationships between clusterings found at different time steps. Examples of this can be found in~\cite{rosvall_2010, palla_2007, Lorenz-Spreen_2018}. These methods are all subject to the same major drawback; smooth transitions between clusterings are difficult to achieve, and parameter choices or methods to solve the matching (of which many exist, see~\cite{cazabet_2021}) have a large impact on the output. Approaches falling in category b) may construct a \textit{coupling graph} with edges representing temporal relationships between vertices. Examples of this can be found in~\cite{jdidia_2007, mucha_2010}, and we mention in detail the approach considered in~\cite{sole_ribalta_2013}. This method couples time layers together and defines a \emph{supra-Laplacian} which can be used to identify communities via spectral clustering. Such methods typically rely on parameters that must be tuned, and in particular there is no standard method for choosing the coupling strength between time steps for the supra-Laplacian, so the resulting communities are heavily dependent on the parameters chosen.

We also note the connection between clustering time-evolving networks and notions of metastability and coherence in dynamical systems. Metastable sets \cite{Davies82a, Davies82b, Bovier16} are sets such that trajectories in dynamical systems will remain within the set with a high probability, and trajectories outside this set will remain outside with a high probability. This concept can be applied, for example, to molecular dynamics to detect conformations of molecules~\cite{SKH23}, and we can interpret clusters in undirected graphs as metastable states of random walkers on the graph \cite{klus_2024_review}. This notion can be extended to time-dependent metastability, known as \emph{coherence}, with the following idea: Coherent sets \cite{Froyland_2010, Froyland_2013, Gehle_2017} in dynamical systems are time-dependent sets whose boundaries in space-time are crossed by trajectories with low probability \cite{fackeldey_2019}. Such sets play an important role in fluid dynamics, where they represent, for instance, slowly mixing eddies or gyres in the ocean \cite{FSvS14}. This is an intuitive expansion of metastability, where the boundaries of metastable sets are defined only in space. We explore the parallels between detecting coherence in dynamical systems and our spatio-temporal Laplacian approach for graph clustering in later sections of this work. Further, we consider the potential for communities of vertices which can be regarded as clusters because they have common in- or out-links. These kinds of clusters are key in many applications, including in neuroscience, biology, and social networks. We also refer to \cite{Hlinka_2017, Zhou_2015} for examples of how accounting for directionality in climate systems can  lead to important discoveries about the system; so-called \emph{teleconnections} between geographically remote regions can be used to explain or predict climate and weather phenomena. Therefore, detecting directed clusters in climate networks can be interpreted as identifying globally disparate regions all contributing to the same pattern or phenomena.

Canonical correlation analysis is a multivariate statistical technique that is used for measuring the linear relationships between two multidimensional variables \cite{hotelling_1936}, which seeks to identify linear transformations of variables that maximizes the correlation between these transformations. It has been shown in~\cite{Klus_2019} that CCA is related to the forward--backward dynamics of stochastic differential equations and can be used for the detection of coherent sets in time series data by considering only the first and last time steps. It was shown in~\cite{klus_2024} that CCA is also related to the forward--backward dynamics of a graph and that maximizing the correlation via CCA is equivalent to finding the eigenfunctions of the forward--backward operator which can then be used to cluster directed graphs. This motivates our approach, where we aim to extend CCA by maximizing the correlation across time steps (which we now refer to as \emph{views}) and then using the resulting eigenfunctions in order to maximize coherence over the time interval.

In~\cite{cristianini_2004}, a natural extension of CCA, called \emph{multiview CCA} (mCCA), is proposed which maximizes coherence across all views. In this work we adapt mCCA to consider only the maximization of correlations between adjacent time views. Using this formulation of mCCA, we can identify relationships between variables at intermediate times, as opposed to only considering the start points and end points of the time interval. This is key as we specifically aim to detect changes in coherence over time, and not to simply identify sets that are coherent across the whole interval.  This is particularly useful in cases where we are interested in data that evolve in time, where the relationship dynamics are changing throughout the time period. For example, if we consider a transportation network, road closures could be represented by the removal of edges, which would reappear once the road has reopened. The closing and reopening of the same road would not be detected if we applied CCA on the first and last time steps only.

Using mCCA, we propose a \textit{spatio-temporal graph Laplacian} and then apply a spectral clustering algorithm to detect coherent sets in time-evolving graphs and characterize the evolution of these sets. Unlike existing methods, the spatio-temporal graph Laplacian does not require parameter tuning. Also, the eigenvalues of the spatio-temporal graph Laplacian are always real-valued so our method is also capable of clustering directed time-evolving graphs. The main contributions of this work are:
\begin{itemize}[leftmargin=4.5ex, itemsep=0ex]
\item We extend definitions of transfer operators, covariance operators and related concepts to time-evolving graphs, and study their properties.
\item We introduce a variant of canonical correlation analysis capable of maximizing correlations across multiple time slices, and provide theoretical results relating to this method.
\item We define the spatio-temporal graph Laplacian based on this extended canonical correlation analysis and illustrate how it can be used to detect communities in time-evolving graphs.
\item We analyze the performance of our approach using different types of benchmark graphs with clusters that exhibit splitting and merging behavior and clusters that grow or shrink in size. We compare our results with other spectral clustering approaches.
\end{itemize}

In Section~\ref{sec:background}, we present relevant definitions for graphs, transfer operators, and other mathematical concepts. We proceed by formulating an extension to canonical correlation analysis in Section~\ref{sec:sc_teg}, where we also introduce and analyze the spatio-temporal graph Laplacian. The interpretation of this Laplacian is explored and a novel spectral clustering algorithm is presented. Finally, we evaluate this algorithm in Section~\ref{sec:num_results} by comparing our method to existing methods found in the literature using constructed benchmark graphs. We conclude with a discussion and remark on open questions in Section~\ref{sec:discussion}.

\section{Background} \label{sec:background}

In this section we outline some basic definitions of graphs and briefly describe relevant transfer operators. For more details, see~\cite{klus_2024}.

\subsection{Graphs}

We begin by introducing static and time-evolving graphs and some key related concepts. In this work we largely refer to undirected graphs but we remark throughout how the method can also be applied to directed graphs.

\begin{definition}[Weighted graph]
Let $ \mc{V} = \{\mc[1]{v}, \dots, \mc[n]{v}\} $ be the set vertices and $ \mc{E} \subseteq \mc{V} \times \mc{V} $ the set of edges. Then $ \mc{G} = (\mc{V}, \mc{E}) $ is the graph given by these edges and vertices. The graph $ \mc{G} $ can also be associated with a weighted adjacency matrix $ W \in \mathbb{R}^{n \times n} $, where
\begin{align*}
    W_{ij} = \begin{cases}
        w(\mc[i]{v}, \mc[j]{v}), & \text{if } (\mc[i]{v}, \mc[j]{v}) \in \mc{E}, \\
    0, & \text{otherwise},
    \end{cases}
\end{align*}
and $ w(\mc[i]{v}, \mc[j]{v}) > 0 $ is the weight of the edge $ (\mc[i]{v}, \mc[j]{v}) $. Note that in the special case where $ w(\mc[i]{v}, \mc[j]{v}) = w(\mc[j]{v}, \mc[i]{v}) $ for all $ i, j $, the graph is \emph{undirected}.
\end{definition}

\begin{definition}[Time-evolving graph]
A time-evolving graph $ \mc{G}^{(M)} = \{\mc[1]{G}, \dots, \mc[M]{G}\} $ is given by a set of graphs $ \mc[t]{G} = (\mc{V}, \mc{E_t}) $ defined at each time view $ t \in \{1, \dots, M\} $. As above, each snapshot $ \mc[t]{G} $ has an associated weighted adjacency matrix $ W_t $.
\end{definition}

Note here that the set of vertices $ \mc{V} $ remains unchanged for all time steps (and therefore the dimension of $ W_t $ is constant), and vertex labels are preserved throughout. In the literature, time-evolving graphs are referred to by many other names, including time-varying graphs~\cite{casteigts_2012}, dynamic networks~\cite{kuhn_2011}, and temporal networks~\cite{holme_2012}. In this work we refer to them as time-evolving graphs, and we call graphs that are not evolving in time static graphs. In general, time-evolving graphs can have vertices disappearing and reappearing from one time step to the next (i.e.,\ $ \mc{V} $ is also time-dependent). However, an analysis of such graphs is beyond the scope of this work.

\begin{definition}[Transition matrix]
For a static graph $ \mc{G} = (\mc{V}, \mc{E}) $ with $ n $ vertices, the transition matrix $ S \in \R^{n \times n} $ is defined as $ S = D_{\mathscr{o}}^{-1} W $, where
\begin{equation*}
    D_{\mathscr{o}} = \diag(\mathscr{o}(\mc[1]{v}), \dots, \mathscr{o}(\mc[n]{v})) \quad \text{and} \quad  \mathscr{o}(\mc[i]{v}) = \sum_{j=1}^n W_{ij}.
\end{equation*}
This transition matrix represents the probability that a random walker will move from $ \mc[i]{v} $ to $ \mc[j]{v} $ in one step. For time-evolving graphs, we can define the transition matrix at each view. The graph $ \mc{G}^{(M)} = \{\mc[1]{G}, \dots, \mc[M]{G}\} $ has associated transition matrices $ S^{(M)} = \{S_1, \dots, S_M\} $.
\end{definition}

\begin{definition}[Random walk on static graphs]
For a static graph $ \mc{G} = (\mc{V}, \mc{E}) $, a random walk is a discrete stochastic process starting in a vertex $ \mc[i]{v} $. At each time step the walk moves to another vertex $ \mc[j]{v} $ with probability $ S_{ij} $, where $ S $ is the transition matrix of $ \mc{G} $.
\end{definition}

\begin{definition}[Random walk on time-evolving graphs] \label{def: terw}
Let $ \mc{G}^{(M)} = \{\mc[1]{G}, \dots, \mc[M]{G} \} $ be a time-evolving graph with associated transition matrices $ S^{(M)} = \{S_1, \dots, S_M\} $. Then a time-evolving random walk on $ \mc{G}^{(M)} $ is a discrete stochastic process starting at $ \mc[i]{v} $ at $ t = 1 $ and moving to $ \mc[j]{v} $ at time $ t = 2 $ with probability $ (S_1)_{ij} $. The process then moves from $ \mc[j]{v} $ at $ t = 2 $ to $ \mc[k]{v} $ at $ t = 3 $ with probability $ (S_2)_{jk} $, and so on. That is, the transition probabilities between states of the random walk at time $ t $ are determined by the transition matrix of the time-evolving graph at time $ t $.
\end{definition}

\begin{examplewithend} \label{example1}

\begin{figure}
    \centering
    \vspace*{1ex}
    \resizebox{0.45\textwidth}{!}{%
    \begin{tikzpicture}[
            >= stealth, 
            semithick 
        ]
        \tikzstyle{every state}=[
            draw=black,
            thick,
            fill=white,
            inner sep=0pt,
            text width=8mm,
            align=center,
            scale=0.6,
            font = \Large
        ]
        \tikzset{every loop/.style={}} 

        \node[state] (v1) {$\mc[1]{v}$};
        \node[state] (v2) [right=0.5cm of v1] {$\mc[2]{v}$};
        \node[state] (v3) [right=0.5cm of v2] {$\mc[3]{v}$};
        \node[state] (v4) [right=0.5cm of v3] {$\mc[4]{v}$};
        \node[state] (v5) [right=0.5cm of v4] {$\mc[5]{v}$};
        \node[state] (v6) [right=0.5cm of v5] {$\mc[6]{v}$};

        \path (v1) edge [loop above, min distance=5mm, in=60, out=120] node {} (v1);
        \path (v2) edge [loop above, min distance=5mm, in=60, out=120] node {} (v2);
        \path (v3) edge [loop above, min distance=5mm, in=60, out=120] node {} (v3);
        \path (v4) edge [loop above, min distance=5mm, in=60, out=120] node {} (v4);
        \path (v5) edge [loop above, min distance=5mm, in=60, out=120] node {} (v5);
        \path (v6) edge [loop above, min distance=5mm, in=60, out=120] node {} (v6);

        \path[-] (v1) edge node {} (v2);
        \path[-, dotted] (v2) edge node {} (v3);
        \path[-] (v3) edge node {} (v4);
        \path[-, dotted] (v4) edge node {} (v5);
        \path[-] (v5) edge node {} (v6);

        \node[state] (v7) [below=1cm of v1]{$\mc[1]{v}$};
        \node[state] (v8) [right=0.5cm of v7] {$\mc[2]{v}$};
        \node[state] (v9) [right=0.5cm of v8] {$\mc[3]{v}$};
        \node[state] (v10) [right=0.5cm of v9] {$\mc[4]{v}$};
        \node[state] (v11) [right=0.5cm of v10] {$\mc[5]{v}$};
        \node[state] (v12) [right=0.5cm of v11] {$\mc[6]{v}$};

        \path (v7) edge [loop above, min distance=5mm, in=60, out=120] node {} (v7);
        \path (v8) edge [loop above, min distance=5mm, in=60, out=120] node {} (v8);
        \path (v9) edge [loop above, min distance=5mm, in=60, out=120] node {} (v9);
        \path (v10) edge [loop above, min distance=5mm, in=60, out=120] node {} (v10);
        \path (v11) edge [loop above, min distance=5mm, in=60, out=120] node {} (v11);
        \path (v12) edge [loop above, min distance=5mm, in=60, out=120] node {} (v12);

        \path[-] (v7) edge node {} (v8);
        \path[-, dashed] (v8) edge node {} (v9);
        \path[-] (v9) edge node {} (v10);
        \path[-, dotted] (v10) edge node {} (v11);
        \path[-] (v11) edge node {} (v12);

        \node[state] (v13) [below=1cm of v7]{$\mc[1]{v}$};
        \node[state] (v14) [right=0.5cm of v13] {$\mc[2]{v}$};
        \node[state] (v15) [right=0.5cm of v14] {$\mc[3]{v}$};
        \node[state] (v16) [right=0.5cm of v15] {$\mc[4]{v}$};
        \node[state] (v17) [right=0.5cm of v16] {$\mc[5]{v}$};
        \node[state] (v18) [right=0.5cm of v17] {$\mc[6]{v}$};

        \path (v13) edge [loop above, min distance=5mm, in=60, out=120] node {} (v13);
        \path (v14) edge [loop above, min distance=5mm, in=60, out=120] node {} (v14);
        \path (v15) edge [loop above, min distance=5mm, in=60, out=120] node {} (v15);
        \path (v16) edge [loop above, min distance=5mm, in=60, out=120] node {} (v16);
        \path (v17) edge [loop above, min distance=5mm, in=60, out=120] node {} (v17);
        \path (v18) edge [loop above, min distance=5mm, in=60, out=120] node {} (v18);

        \path[-] (v13) edge node {} (v14);
        \path[-] (v14) edge node {} (v15);
        \path[-] (v15) edge node {} (v16);
        \path[-, dotted] (v16) edge node {} (v17);
        \path[-] (v17) edge node {} (v18);

        \node[state] (v19) [below=1cm of v13] {$\mc[1]{v}$};
        \node[state] (v20) [right=0.5cm of v19] {$\mc[2]{v}$};
        \node[state] (v21) [right=0.5cm of v20] {$\mc[3]{v}$};
        \node[state] (v22) [right=0.5cm of v21] {$\mc[4]{v}$};
        \node[state] (v23) [right=0.5cm of v22] {$\mc[5]{v}$};
        \node[state] (v24) [right=0.5cm of v23] {$\mc[6]{v}$};

        \path (v19) edge [loop above, min distance=5mm, in=60, out=120] node {} (v19);
        \path (v20) edge [loop above, min distance=5mm, in=60, out=120] node {} (v20);
        \path (v21) edge [loop above, min distance=5mm, in=60, out=120] node {} (v21);
        \path (v22) edge [loop above, min distance=5mm, in=60, out=120] node {} (v22);
        \path (v23) edge [loop above, min distance=5mm, in=60, out=120] node {} (v23);
        \path (v24) edge [loop above, min distance=5mm, in=60, out=120] node {} (v24);

        \path[-] (v19) edge node {} (v20);
        \path[-] (v20) edge node {} (v21);
        \path[-] (v21) edge node {} (v22);
        \path[-, dotted] (v22) edge node {} (v23);
        \path[-] (v23) edge node {} (v24);

        \foreach \Point in {(1.13, -0.196),(1.29, -1.494),(0.926, -3.149),(5.319, -4.721),(0.048, -0.027),(0.199, -1.541),(-0.074, -3.095),(2.381, -4.823),(1.083, -0.15),(1.145, -1.579),(1.061, -3.218),(0.977, -4.75),(1.245, 0.058),(1.234, -1.665),(1.109, -3.345),(1.134, -4.606),(-0.092, 0.009),(-0.137, -1.591),(2.293, -3.352),(0.091, -4.868),(1.14, -0.159),(1.096, -1.396),(0.897, -3.151),(1.112, -4.637),(0.941, 0.074),(1.227, -1.466),(0.956, -3.193),(3.375, -5.002),(1.212, -0.145),(1.059, -1.456),(1.224, -3.156),(3.399, -4.859),(-0.159, 0.112),(0.099, -1.747),(2.153, -3.056),(-0.009, -4.85),(1.222, 0.149),(1.125, -1.541),(3.376, -3.153),(1.259, -4.932),(1.172, -0.136),(1.085, -1.495),(1.255, -3.221),(0.989, -4.755),(-0.025, -0.071),(0.087, -1.556),(2.109, -3.398),(2.064, -4.799),(-0.003, -0.083),(0.095, -1.632),(0.042, -3.36),(2.194, -4.621),(1.256, -0.147),(0.898, -1.573),(1.23, -3.235),(3.177, -4.917),(1.044, -0.184),(1.251, -1.471),(1.007, -3.293),(3.248, -4.758),(1.044, -0.156),(1.278, -1.665),(3.474, -3.168),(3.241, -4.884),(1.089, -0.178),(0.97, -1.554),(1.072, -3.218),(3.179, -4.801),(1.23, -0.024),(1.049, -1.433),(3.278, -3.382),(3.302, -4.84),(-0.02, -0.079),(2.209, -1.406),(-0.065, -3.197),(2.345, -4.697),(1.196, 0.156),(1.078, -1.607),(3.387, -3.11),(1.216, -4.953),(0.099, 0.18),(0.209, -1.552),(2.359, -3.199),(2.404, -4.735),(1.058, -0.178),(0.941, -1.55),(1.106, -3.279),(3.234, -5.001),(1.131, -0.039),(1.035, -1.554),(3.248, -3.13),(1.186, -4.79),(0.195, -0.041),(0.051, -1.525),(2.338, -3.043),(2.186, -4.865),(0.177, 0.079),(0.097, -1.781),(0.205, -3.247),(1.996, -4.835),(0.118, -0.18),(-0.024, -1.521),(0.099, -3.216),(0.175, -4.679),(-0.008, -0.199),(-0.033, -1.429),(0.175, -3.245),(-0.073, -4.853),(1.013, -0.073),(1.177, -1.79),(1.164, -3.333),(0.956, -4.731),(1.191, 0.019),(0.93, -1.473),(0.957, -3.23),(1.032, -4.6),(-0.118, 0.133),(0.171, -1.502),(-0.045, -3.251),(0.1, -4.829),(0.085, -0.148),(-0.026, -1.538),(0.052, -3.303),(2.099, -4.984),(1.191, -0.145),(1.302, -1.635),(3.225, -3.216),(3.398, -4.789),(-0.167, -0.0),(-0.011, -1.482),(-0.044, -2.993),(0.071, -4.614),(1.181, -0.164),(1.04, -1.746),(1.105, -3.261),(1.107, -4.84),(-0.189, 0.087),(0.017, -1.417),(2.387, -3.233),(2.277, -4.662),(0.018, -0.044),(-0.08, -1.548),(2.189, -3.192),(0.09, -4.997),(-0.023, 0.025),(-0.063, -1.616),(0.035, -3.326),(0.11, -4.669),(1.143, 0.142),(0.997, -1.554),(1.086, -3.128),(1.063, -4.896),(1.274, 0.123),(1.031, -1.433),(0.957, -3.208),(1.256, -4.894),(1.267, 0.008),(1.186, -1.629),(0.915, -3.168),(0.959, -4.672),(-0.05, 0.053),(0.031, -1.566),(0.161, -3.26),(2.131, -4.724),(1.188, -0.025),(1.162, -1.528),(1.218, -3.33),(3.302, -4.884),(1.156, -0.194),(1.303, -1.684),(3.371, -3.291),(3.177, -4.935),(1.125, 0.129),(1.079, -1.779),(1.224, -3.129),(1.03, -4.75),(-0.111, 0.181),(-0.141, -1.591),(0.075, -2.995),(-0.001, -4.804),(0.923, -0.078),(1.226, -1.551),(3.393, -3.008),(1.0, -4.971),(0.075, 0.112),(0.184, -1.59),(0.016, -3.39),(4.429, -4.972),(1.293, -0.057),(1.118, -1.483),(1.18, -3.245),(0.974, -4.804),(1.069, -0.061),(1.17, -1.712),(3.395, -3.016),(3.245, -5.006),(-0.183, -0.04),(-0.151, -1.715),(2.24, -3.108),(0.05, -4.842),(-0.029, -0.038),(0.156, -1.593),(2.37, -3.264),(2.113, -4.82),(0.907, 0.064),(1.134, -1.498),(1.001, -3.272),(1.243, -4.773)}
            \draw[green,fill=green] \Point circle (0.045ex);
        \foreach \Point in {(1.982, -0.013),(0.185, -1.51),(-0.021, -3.136),(1.999, -4.73),(2.244, 0.199),(4.529, -1.713),(4.358, -3.214),(4.399, -4.967),(3.248, 0.203),(3.125, -1.732),(1.259, -3.116),(1.04, -4.826),(2.111, -0.085),(4.424, -1.708),(4.4, -3.357),(4.594, -4.702),(2.062, 0.058),(2.361, -1.718),(2.192, -3.001),(2.128, -4.913),(3.376, 0.075),(5.52, -1.589),(5.642, -3.296),(3.413, -4.841),(3.236, -0.003),(3.231, -1.4),(3.18, -3.222),(3.169, -4.734),(2.195, 0.153),(2.06, -1.503),(1.985, -3.236),(0.212, -4.844),(3.134, 0.02),(3.166, -1.448),(0.932, -3.186),(1.121, -4.726),(3.177, 0.182),(3.468, -1.506),(3.427, -3.165),(3.423, -4.967),(2.21, -0.135),(2.278, -1.672),(-0.094, -3.021),(0.056, -4.784),(3.423, 0.084),(3.235, -1.742),(1.14, -3.328),(1.19, -4.735),(2.322, -0.092),(2.124, -1.709),(2.101, -3.098),(2.09, -4.666),(3.128, -0.065),(3.18, -1.698),(3.334, -3.368),(1.11, -4.851),(2.17, -0.126),(2.078, -1.636),(4.585, -3.165),(4.368, -4.81),(2.368, -0.044),(2.146, -1.589),(2.208, -3.296),(2.128, -4.864),(3.113, -0.066),(3.147, -1.546),(1.316, -3.196),(3.132, -4.688),(3.219, -0.027),(3.192, -1.489),(3.256, -3.353),(3.362, -4.869),(2.212, -0.185),(2.234, -1.692),(2.38, -3.308),(2.251, -4.992),(3.18, -0.064),(3.092, -1.648),(3.433, -3.056),(3.344, -4.809),(3.419, 0.047),(3.15, -1.645),(1.202, -3.294),(3.319, -5.002),(2.389, 0.031),(2.304, -1.731),(2.021, -3.073),(2.244, -4.818),(2.396, -0.005),(2.033, -1.505),(2.349, -3.05),(2.393, -4.808),(2.081, 0.014),(2.042, -1.565),(2.369, -3.083),(-0.063, -4.963),(2.23, 0.055),(2.167, -1.388),(4.503, -3.293),(4.323, -4.687),(2.187, -0.118),(4.366, -1.468),(4.353, -3.07),(4.592, -4.775),(1.999, -0.034),(2.251, -1.654),(1.999, -3.222),(2.007, -4.756),(2.198, -0.202),(2.043, -1.485),(2.221, -3.292),(-0.023, -4.93),(2.039, -0.142),(2.268, -1.595),(2.251, -3.256),(-0.157, -4.781),(2.227, -0.027),(2.251, -1.555),(-0.018, -2.993),(-0.156, -4.946),(2.152, 0.125),(2.239, -1.401),(2.144, -3.047),(2.195, -4.994),(2.35, 0.078),(4.258, -1.569),(4.375, -3.325),(4.479, -4.822),(2.088, -0.169),(2.069, -1.703),(2.231, -2.987),(2.33, -4.939),(2.394, -0.003),(2.364, -1.634),(2.385, -3.247),(2.153, -4.981),(3.477, -0.053),(3.212, -1.454),(3.356, -3.105),(1.092, -4.934),(2.28, -0.076),(2.017, -1.502),(2.35, -3.316),(2.135, -4.596),(3.336, -0.154),(3.4, -1.527),(3.311, -3.398),(3.229, -4.636),(3.249, 0.017),(3.238, -1.424),(5.475, -3.017),(5.397, -4.947),(3.252, -0.083),(3.151, -1.519),(3.268, -3.321),(3.103, -4.784),(2.252, -0.088),(2.199, -1.753),(1.997, -3.202),(2.273, -4.946),(2.02, -0.101),(2.269, -1.626),(0.076, -3.102),(2.013, -4.799),(2.123, -0.092),(2.182, -1.78),(2.341, -3.279),(2.143, -4.76),(3.234, -0.113),(1.277, -1.513),(0.978, -3.17),(3.327, -4.689),(3.399, -0.176),(3.403, -1.479),(3.502, -3.189),(1.039, -4.651),(2.218, -0.168),(2.253, -1.702),(2.226, -3.313),(2.309, -4.764)}
            \draw[red,fill=red] \Point circle (0.045ex);
        \foreach \Point in {(5.67, -0.092),(5.527, -1.487),(5.603, -3.154),(5.477, -4.843),(4.465, 0.108),(4.238, -1.722),(2.032, -3.328),(0.035, -4.683),(4.427, -0.175),(2.093, -1.463),(2.113, -3.052),(0.111, -4.865),(5.672, 0.129),(3.203, -1.592),(3.497, -3.138),(1.146, -4.916),(4.474, 0.078),(4.39, -1.708),(4.556, -3.239),(4.335, -4.596),(4.22, -0.098),(4.458, -1.8),(4.453, -3.4),(4.416, -5.019),(5.296, -0.035),(5.455, -1.732),(5.502, -3.101),(3.19, -4.691),(4.398, -0.173),(4.247, -1.46),(4.498, -3.226),(4.367, -4.59),(4.397, 0.213),(4.329, -1.673),(4.264, -3.146),(4.378, -4.621),(4.39, -0.15),(4.503, -1.504),(4.384, -3.3),(4.261, -4.674),(4.25, 0.14),(4.443, -1.491),(4.411, -3.12),(4.5, -4.762),(5.42, -0.173),(5.5, -1.735),(5.397, -3.294),(5.363, -4.796),(4.335, 0.082),(4.409, -1.738),(4.559, -3.183),(4.405, -4.602),(5.491, 0.171),(5.516, -1.387),(5.501, -3.036),(5.483, -4.791),(5.426, -0.2),(5.414, -1.753),(5.537, -3.332),(5.585, -4.846),(4.522, 0.16),(4.53, -1.481),(4.328, -3.092),(4.38, -4.946),(5.708, -0.033),(5.515, -1.381),(5.709, -3.209),(5.582, -4.714),(4.453, 0.06),(4.489, -1.716),(4.47, -3.231),(4.54, -4.833),(4.22, 0.017),(4.41, -1.776),(4.274, -3.1),(4.272, -4.899),(5.586, 0.164),(5.578, -1.406),(5.678, -3.245),(5.522, -4.653),(5.552, 0.142),(5.697, -1.503),(5.411, -3.265),(5.365, -4.967),(5.704, -0.064),(5.62, -1.633),(5.697, -3.275),(5.548, -4.655),(5.663, 0.09),(5.542, -1.52),(5.573, -3.389),(5.498, -4.858),(5.666, -0.035),(5.367, -1.506),(5.55, -3.005),(5.642, -4.895),(4.317, -0.108),(4.235, -1.558),(4.566, -3.182),(4.284, -4.759),(5.585, 0.167),(5.42, -1.576),(5.649, -3.212),(5.637, -4.899),(4.425, 0.092),(4.21, -1.584),(4.31, -3.053),(4.314, -4.942),(5.407, 0.186),(5.508, -1.684),(5.334, -3.092),(5.512, -4.951),(4.471, 0.119),(4.335, -1.786),(4.41, -3.017),(4.198, -4.758),(5.512, 0.083),(5.523, -1.41),(5.559, -3.194),(3.218, -4.708),(4.45, 0.191),(4.437, -1.426),(2.186, -3.377),(0.002, -4.868),(5.649, 0.116),(5.303, -1.559),(5.578, -3.128),(5.452, -4.887),(4.535, -0.019),(4.259, -1.543),(4.498, -3.126),(4.47, -4.901),(5.628, 0.127),(5.573, -1.507),(5.436, -3.147),(5.439, -4.639),(5.554, -0.072),(5.511, -1.475),(5.362, -3.081),(5.288, -4.851),(4.375, 0.046),(4.268, -1.752),(4.327, -3.146),(4.25, -4.957),(4.511, 0.081),(4.358, -1.567),(4.588, -3.225),(4.293, -4.927),(4.52, -0.105),(4.251, -1.648),(4.313, -3.2),(4.456, -4.812),(5.437, -0.096),(5.331, -1.609),(5.517, -3.096),(5.322, -4.701),(4.584, 0.044),(4.539, -1.567),(4.37, -3.28),(4.45, -4.811),(5.578, 0.147),(5.545, -1.576),(5.334, -3.282),(5.441, -4.793),(5.461, -0.11),(5.497, -1.474),(5.361, -3.361),(5.307, -4.725),(4.473, 0.127),(4.567, -1.736),(4.285, -3.322),(4.284, -4.767),(4.351, -0.014),(4.385, -1.505),(4.33, -3.164),(4.483, -4.914),(5.688, -0.112),(5.514, -1.462),(5.513, -3.126),(5.391, -4.655),(5.56, -0.15),(5.462, -1.467),(5.491, -3.068),(5.629, -4.765),(4.448, -0.165),(2.15, -1.511),(0.006, -3.202),(0.044, -4.811),(4.524, 0.077),(4.511, -1.596),(4.441, -3.195),(4.28, -4.821),(5.695, 0.045),(5.339, -1.731),(5.595, -3.018),(5.397, -4.64),(4.321, 0.107),(4.438, -1.607),(4.225, -3.321),(4.501, -4.678),(4.433, -0.07),(4.394, -1.622),(4.358, -3.392),(4.612, -4.792),(4.4, 0.208),(4.262, -1.483),(2.259, -3.099),(-0.086, -4.887),(5.346, -0.155),(5.64, -1.507),(5.536, -3.151),(5.401, -4.86)}
            \draw[cyan,fill=cyan] \Point circle (0.045ex);
    \end{tikzpicture}}
    \caption{Time-evolving random walk on a line graph $ \mc{G}^{(M)} $, where $ \mc{V} = \{\mc[1]{v}, \dots, \mc[6]{v}\} $ and $ M = 4 $. The solid lines represent edges with weight 1, dashed lines represent edges with weight 0.1, and dotted lines represent edges with weights 0.01. At $ t = 1 $, we have 3 clusters of 2 vertices each. Green, red and blue random walkers begin in clusters 1, 2 and 3, respectively. At each time step, the graph updates and the random walkers take one step. In this line graph, we update only the edge between $ \mc[2]{v} $ and $ \mc[3]{v} $, increasing the weight at $ t = 1, 2, 3 $, so that by $ t = 4 $ we have 2 clusters. That is, the clusters $ \{\mc[1]{v}, \mc[2]{v}\} $ and $ \{\mc[3]{v}, \mc[4]{v}\} $ at $ t = 1 $ merge to form a single cluster $ \{\mc[1]{v}, \mc[2]{v}, \mc[3]{v}, \mc[4]{v}\} $ at $t=4$.}
    \label{fig:terw_example}
\end{figure}
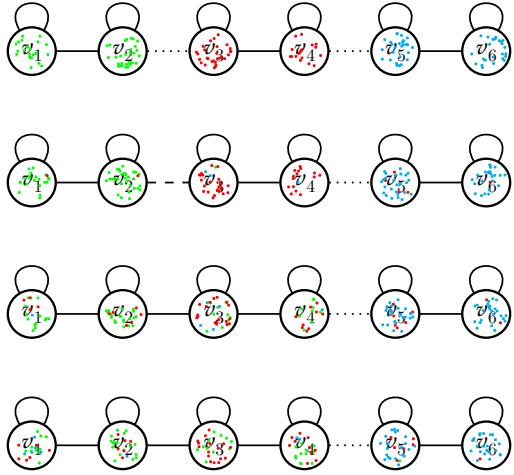

To illustrate random walks on time-evolving graphs, see Figure~\ref{fig:terw_example}. The movement of the random walkers over time shows that sets of vertices that are coherent across the whole time interval (cluster 3) have very few random walkers escaping. Comparatively, the clusters that merge over time initially have little mixing of random walkers between them, but later have a large amount of mixing. This illustrates the notion of coherence and changing cluster structure in time-evolving graphs.
\end{examplewithend}

\subsection{Coherence and Clusters}

When we consider directed graphs, there are two distinct types of cluster that we aim to detect. The first, which we refer to as a \emph{conventional} cluster, or alternatively a \emph{diagonal} cluster, can be described informally as a group of vertices that are more similar to each other than they are to the other vertices in the network. For clustering on graphs we typically consider the weights of edges between vertices to be their similarity or closeness, so a conventional cluster in this case is a group of vertices where there are many edges connecting vertices inside the group, and relatively few edges from these vertices leaving the group. We can also call these \emph{diagonal} clusters, as they can be represented in adjacency matrices as dense blocks on the diagonal, see, for example, Figure \ref{fig:benchmark1_fig}(a).

We are also interested in groups of vertices where the group members all have common in-links or common out-links. That is, they may not be well connected internally but they may instead all share a directed edge to a vertex in another set. We refer to this type of cluster as a \emph{non-conventional} or \emph{off-diagonal} cluster, because these clusters can be represented as off-diagonal blocks in adjacency matrices. These notions are described also in \cite{klus_2022, klus_2024}, where we formulate a definition of \emph{coherence} that is based on \cite{Froyland_2010, Banisch_2017}.

\begin{definition}
    Let $S^{\tau}$ be the flow associated with a dynamical system for some fixed lag time $\tau$. If two sets $\mathbb{A}$ and $\mathbb{B}$ satisfy
    \begin{align*}
        S^{\tau} (\mathbb{A} )\approx \mathbb{B} \text{ and } S^{-\tau}( \mathbb{B}) \approx \mathbb{A},
    \end{align*}
    then we say that $\mathbb{A}$ and $\mathbb{B}$ form a coherent pair and  call $\mathbb{A}$ a \emph{finite-time coherent set}.
\end{definition}

This definition shows that clusters can be regarded as sets that remain almost invariant under the forward--backward dynamics of a system. The spectral clustering approach in this work is based on a graph Laplacian that contains information about these coherent sets and their evolution in time, and we proceed by defining the necessary transfer operators that are required for this.

\subsection{Transfer Operators} \label{sec:transfer_operators}

We now define transfer operators on graphs, in particular the Koopman and Perron--Frobenius operators, which describe the evolution of observables and probability densities of a dynamical system, respectively. These operators can either be estimated from random-walk data or expressed directly using the transition matrices of a graph~\cite{klus_2024}.

\begin{definition}[Probability density] \label{def:probability}
Let $ \mathbb{U} = \{f: \mc{V} \to \mathbb{R}\} $ be the set of real-valued functions defined on $ \mc{V} $. Then a function $ \mu \in \mathbb{U} $ satisfying $ \mu(\mc[i]{v}) \geq 0 $ and
\begin{align*}
    \sum_{i=1}^n \mu(\mc[i]{v}) = 1
\end{align*}
is called a \emph{probability density}.
\end{definition}

\begin{definition}[Perron--Frobenius \& Koopman operators] \label{def:transfer_operators}
Let $ S^{(M)} = \{S_1, \dots, S_M \} $ be the transition probability matrices associated with the time-evolving graph $ \mc{G}^{(M)} = \{ \mc[1]{G}, \dots, \mc[M]{G} \} $.

\begin{enumerate}[leftmargin=3.5ex, itemsep=0ex, topsep=0.5ex, label=\roman*)]
\item For a function $ \rho \in \mathbb{U} $, the \emph{Perron--Frobenius operator} at view $ t $, $ \mathcal{P}_t \colon \mathbb{U} \to \mathbb{U} $, is defined by
\begin{equation*}
    \mathcal{P}_t \ts \rho(\mc[i]{v}) = \sum_{j=1}^n (S_t)_{ji} \ts \rho(\mc[j]{v}).
\end{equation*}
\item Similarly, for a function $ f \in \mathbb{U} $, the \emph{Koopman operator} at time $ t $, $ \mathcal{K}_t \colon \mathbb{U} \to \mathbb{U} $, is given by
\begin{equation*}
    \mathcal{K}_t \ts f(\mc[i]{v}) = \sum_{j=1}^n (S_t)_{ij} \ts f(\mc[j]{v}).
\end{equation*}
\end{enumerate}
\end{definition}

We also define a reweighted Perron--Frobenius operator, which propagates densities with respect to a given reference density.

\begin{definition}[Reweighted Perron--Frobenius operator]\label{def: transfer_operators_reweighted}
Let $ \mu_1 $ be the initial reference density at $ t = 1 $ and define $ \mu_{t+1} = \mathcal{P}_t \ts \mu_t $, i.e.,
\begin{equation*}
    \mu_{t+1}(\mc[i]{v}) = \sum_{j=1}^n (S_t)_{ji} \ts \mu_t(\mc[j]{v}).
\end{equation*}
In what follows, we assume that $ \mu_t(\mc[i]{v}) > 0 $ for all $ t \in \{1, \dots, M\} $. For $ u \in \mathbb{U} $, the \emph{reweighted Perron--Frobenius operator} at time $ t $, $ \mathcal{T}_t \colon \mathbb{U} \to \mathbb{U} $, is defined by
\begin{equation*}
    \mathcal{T}_t \ts u(\mc[i]{v}) = \frac{1}{\mu_{t+1}(\mc[i]{v})} \sum_{j=1}^n (S_t)_{ji} \ts \mu_{t}(\mc[j]{v}) \ts u(\mc[j]{v}).
\end{equation*}
\end{definition}

For each time view $ t \in \{1, \dots, M\} $, the graph $ \mc[t]{G} $ has an associated transition matrix $ S_t $ and also an associated density $ \mu_t $. We choose the initial density $ \mu_1 $ to be the uniform density. As in~\cite{klus_2024}, we define the density matrices
\begin{align*}
    D_{\mu_t} = \diag(\boldsymbol{\mu}_t),
\end{align*}
where $ \boldsymbol{\mu}_t \in \R^n $ is the vector representation of $ \mu_t $. Since we assumed the densities $ \mu_t $ to be strictly positive for all $ t $, the matrices $ D_{\mu_t} $ are invertible. We also note that transfer operators on graphs are of finite dimension, and so we can easily compute their matrix representations. Referring again to \cite{klus_2024}, we define at each time view $ t $
\begin{equation} \label{eq:transition_matrices}
\begin{split}
    \mathcal{K}_t \ts \mathbf{f} &:= K_t \ts \mathbf{f} = S_t \ts \mathbf{f}, \\
    \mathcal{T}_t \ts \mathbf{u} &:= T_t \ts \mathbf{u} = D_{\mu_{t+1}}^{-1} S_t^\top D_{\mu_t} \mathbf{u},
\end{split}
\end{equation}
where $ \mathbf{f}, \mathbf{u} \in \R^n $ are the vector representations of $ f, u \in \mathbb{U} $ and $ K_t, T_t \in \R^{n \times n} $ are the matrix representations of $ \mathcal{K}_t, \mathcal{T}_t $. With a slight abuse of notation the operators are applied component-wise.

\subsection{Covariance and Cross-Covariance Operators}

In order to identify clusters using mCCA, we will also need the closely related covariance and cross-covariance operators.

\begin{definition}[Covariance operators]
Let $ S_t $ be the transition probability matrix and $ \mu_t $ the density at time $ t $. Given functions $ f, g \in \mathbb{U} $, we call $ \mathcal{C}_{tt} \colon \mathbb{U} \to \mathbb{U} $, with
\begin{equation*}
    \mathcal{C}_{tt} \ts f(\mc[i]{v}) = \mu_t(\mc[i]{v}) \ts f(\mc[i]{v})
\end{equation*}
\emph{covariance operator} and $ \mathcal{C}_{t(t+1)} \colon \mathbb{U} \to \mathbb{U} $, with
\begin{equation*}
    \mathcal{C}_{t(t+1)} \ts g(\mc[i]{v}) = \sum_{j=1}^n \mu_t(\mc[i]{v}) \ts (S_t)_{ij} \ts g(\mc[j]{v}),
\end{equation*}
\emph{cross-covariance operator}.
\end{definition}

The covariance of functions and cross-covariance between functions can then be computed using these operators, i.e.,
\begin{align*}
    \var(f) &= \innerprod{f}{\mathcal{C}_{tt} \ts f} = \mathbf{f}^\top C_{tt} \ts \mathbf{f}, \\
    \cov(f, g) &= \innerprod{f}{\mathcal{C}_{t(t+1)} \ts g} = \mathbf{f}^\top C_{t(t+1)} \ts \mathbf{g},
\end{align*}
where $ \mathbf{f}, \mathbf{g} \in \R^n $ are the vector representations of the functions $ f, g \in \mathbb{U} $ and $ C_{tt} = D_{\mu_t} $ and $ C_{t(t+1)} = D_{\mu_t} S_t $ are the matrix representations of the operators $ \mathcal{C}_{tt} $ and $ \mathcal{C}_{t(t+1)} $ (see \cite{klus_2024} for more details). The correlation between the functions $ f $ and $ g $ can thus be computed via
\begin{align*}
    \corr(f, g)
        &= \frac{\cov(f, g)}{\sqrt{\var(f)}{\sqrt{\var(g)}}}
        = \frac{\innerprod{f}{\mathcal{C}_{t(t+1)} \ts g}}{\innerprod{f}{\mathcal{C}_{tt} \ts f}^{\nicefrac{1}{2}}{\innerprod{g}{\mathcal{C}_{(t+1)(t+1)} \ts g}^{\nicefrac{1}{2}}}} \\
        &= \frac{\mathbf{f}^\top C_{t(t+1)} \ts \mathbf{g}}{\big(\mathbf{f}^\top C_{tt} \ts \mathbf{f}\big)^{\nicefrac{1}{2}} \big(\mathbf{g}^\top C_{(t+1)(t+1)} \ts \mathbf{g}\big)^{\nicefrac{1}{2}}}.
\end{align*}

\section{Spectral Clustering of Time-Evolving Graphs} \label{sec:sc_teg}

In this section, we first derive a variant of mCCA in order to define the spatio-temporal graph Laplacian. We then investigate properties of this Laplacian, and propose a spectral clustering algorithm for time-evolving graphs.

\subsection{Multiview CCA} \label{sec:CCA}

Let $ X_t $ be a set of random vectors and define $ x_t = \innerprod{w_t}{X_t} $ to be the projections onto the vectors $ w_t $ for $ t \in \{1, \dots, M\} $. Multiview CCA aims to find the projections given by $ w_t $ that maximize the average correlation
\begin{equation*}
    \sigma = \frac{1}{M-1} \sum_{t=1}^{M-1}{\corr(x_t, x_{t+1})}.
\end{equation*}
We choose this coupling of time views since we are interested in detecting the evolution of clusters throughout a time interval, but other time-couplings can also be chosen. In \cite{cristianini_2004}, for example, a multiview CCA formulation that maximizes the correlation across \emph{all} views is proposed.

Typically, in order to detect coherent sets, we estimate the required correlations from Lagrangian data (e.g., GPS tracking data if we want to identify gyres in the ocean \cite{FSvS14}). In the graph setting, however, we can compute the correlations using the covariance and cross-covariance operators as shown above. Let $ f_t \in \mathbb{U} $ now be a function at view $ t $. Our goal is to maximize the average correlation
\begin{align*}
    \sigma &= \frac{1}{M-1} \sum_{t=1}^{M-1} \corr(f_t, f_{t+1}) \\
           &= \frac{1}{M-1} \sum_{t=1}^{M-1} \frac{\mathbf{f}_t^\top C_{t(t+1)} \ts \mathbf{f}_{t+1}}{\big(\mathbf{f}_t^\top C_{tt} \ts \mathbf{f}_t\big)^{\nicefrac{1}{2}} \big(\mathbf{f}_{t+1}^\top C_{(t+1)(t+1)} \ts \mathbf{f}_{t+1}\big)^{\nicefrac{1}{2}}}.
\end{align*}
Note that $ \mathbf{f}_t $ is only determined up to scalar multiplication since
\begin{align*}
    \frac{\lambda_t \ts \mathbf{f}_t^\top C_{t(t+1)} \ts \lambda_{t+1} \ts \mathbf{f}_{t+1}}{\big(\lambda_t^2 \ts \mathbf{f}_t^\top C_{tt} \ts \mathbf{f}_t\big)^{\nicefrac{1}{2}} \big(\lambda_{t+1}^2 \mathbf{f}_{t+1}^\top C_{(t+1)(t+1)} \ts \mathbf{f}_{t+1}\big)^{\nicefrac{1}{2}}}
    = \frac{\mathbf{f}_t^\top C_{t(t+1)} \ts \mathbf{f}_{t+1}}{\big(\mathbf{f}_t^\top C_{tt} \ts \mathbf{f}_t\big)^{\nicefrac{1}{2}} \big(\mathbf{f}_{t+1}^\top C_{(t+1)(t+1)} \ts \mathbf{f}_{t+1}\big)^{\nicefrac{1}{2}}},
\end{align*}
where $ \lambda_t, \lambda_{t+1} > 0 $ are constants. We can therefore constrain the variance terms to equal one and, omitting the constant prefactor $ \frac{1}{M-1} $, write the mCCA problem as
\begin{align*}
    &\max_{\mathbf{f}_t} \sum_{t=1}^{M-1} \mathbf{f}_t^\top C_{t(t+1)} \ts \mathbf{f}_{t+1}, \\
    &\text{s.t. } \mathbf{f}_t^\top C_{tt} \ts \mathbf{f}_t = 1 ~ \forall \ts t = 1, \dots, M.
\end{align*}
We use Lagrange multipliers to obtain
\begin{align*}
    \mathcal{L}(\mathbf{f}_1, \lambda_1, \dots, \mathbf{f}_M, \lambda_M) &= \sum_{t=1}^{M-1} \mathbf{f}_t^\top C_{t(t+1)} \ts \mathbf{f}_{t+1} - \tfrac{1}{2} \lambda_1 (\mathbf{f}_1^\top C_{11} \ts \mathbf{f}_1 - 1) \\& - \sum_{t=2}^{M-1} \lambda_t (\mathbf{f}_t^\top C_{tt} \ts \mathbf{f}_t - 1) - \tfrac{1}{2} \lambda_M (\mathbf{f}_M^\top C_{MM} \ts \mathbf{f}_M - 1),
\end{align*}
where the factors $ \frac{1}{2} $ for the Lagrange multipliers $ \lambda_1 $ and $ \lambda_M $ are chosen for convenience. We then compute the gradients with respect to each view, which gives the following set of equations:
\begin{align*}
    C_{12} \ts \mathbf{f}_2 - \lambda_1 \ts C_{11} \ts \mathbf{f}_1 &= 0, \\
    C_{t(t-1)} \ts \mathbf{f}_{t-1} + C_{t(t+1)} \ts \mathbf{f}_{t+1} - 2 \ts \lambda_t \ts C_{tt} \ts \mathbf{f}_t &= 0, \quad \text{for } t = 2, \dots, M-1, \\
    C_{M(M-1)} \ts \mathbf{f}_{M-1} - \lambda_M \ts C_{MM} \ts \mathbf{f}_M &= 0.
\end{align*}
Assuming that $ \lambda_t = \lambda $ for all $ t $, we can write the set of equations given above as the generalized eigenvalue problem
\begin{align*}
     C_{12} \ts \mathbf{f}_2 &= \lambda \ts C_{11} \ts \mathbf{f}_1, \\
    C_{t(t-1)} \ts \mathbf{f}_{t-1} + C_{t(t+1)} \ts \mathbf{f}_{t+1} &= 2 \ts \lambda \ts C_{tt} \ts \mathbf{f}_t, \quad \text{for } t= 2, \dots, M-1, \\
    C_{M(M-1)} \ts \mathbf{f}_{M-1} &= \lambda \ts C_{MM} \ts \mathbf{f}_M,
\end{align*}
which we write in matrix form as
\begin{equation} \label{eq:mcca_gen_eigenproblem}
    \mathbf{A} \ts \mathbf{f} = \lambda \ts \mathbf{B} \ts \mathbf{f},
\end{equation}
with
\begin{align*}
    \mathbf{A} =
    \begin{bmatrix}
        0 & C_{12} & 0 & \dots & 0 \\
        C_{21} & 0 & \ddots & \ddots & \vdots \\
        0 & \ddots & 0 & \ddots & 0 \\
        \vdots & \ddots & \ddots & \ddots & \hspace{-1em}C_{(M-1)M} \\
        0 & \dots & 0 & C_{M(M-1)} & 0 \\
    \end{bmatrix}\!, ~
    \mathbf{B} =
    \begin{bmatrix}
        C_{11} & 0 & \dots & \dots & 0 \\
        0 & 2 \ts C_{22} & \ddots & \ddots & \vdots \\
        \vdots & \ddots & \ddots & \ddots & \vdots \\
        \vdots & \ddots & \ddots & 2 \ts C_{(M-1)(M-1)} \hspace{-1em} & 0 \\
        0 & \dots & \dots & 0 & C_{MM}
    \end{bmatrix}.
\end{align*}
Since $ C_{tt} = D_{\mu_t} $ and we assume all densities $ \mu_t $ to be positive, the matrix $ \mathbf{B} $ is invertible. Furthermore, $ C_{t(t+1)} = D_{\mu_t} S_t $ implies that $ C_{t(t-1)} = C_{(t-1)t}^\top = S_{t-1}^\top D_{\mu_{t-1}} $. Defining the matrix $ \mathbf{C} := \mathbf{B}^{-1}\mathbf{A} $, we obtain the eigenvalue problem
\begin{align} \label{eq:mcca_eigenproblem}
    \mathbf{C} \ts \mathbf{f} = \lambda \ts \mathbf{f},
\end{align}
with
\begin{equation*}
    \mathbf{C} =
    \begin{bmatrix}
        0 & S_1 & 0 & \dots & 0 \\
        \frac{1}{2}D_{\mu_2}^{-1}S_1^\top D_{\mu_1} & 0 & \frac{1}{2} S_2 & \ddots & \vdots \\
        0 & \ddots & 0 & \ddots & 0 \\
        \vdots & \ddots & \ddots & \ddots & \frac{1}{2}S_{M-1}\\
        0 & \dots & 0 & D_{\mu_M}^{-1} S_{M-1}^\top D_{\mu_{M-1}} & 0
    \end{bmatrix}.
\end{equation*}
The matrix $ \mathbf{C} $ can be reformulated in terms of transfer operators. Using the matrix representations described in \eqref{eq:transition_matrices}, we have
\begin{equation} \label{eq:CM_transfer}
    \mathbf{C} =
    \begin{bmatrix}
        0 & K_1 & 0 & \dots & 0 \\
        \frac{1}{2}T_{1} & 0 & \frac{1}{2}K_2 & \ddots & \vdots \\
        0 & \ddots & 0 & \ddots & 0 \\
        \vdots & \ddots & \frac{1}{2}T_{M-2} & \ddots & \frac{1}{2}K_{M-1} \\
        0 & \dots & 0 & T_{M-1} & 0 \\
    \end{bmatrix}.
\end{equation}

\begin{examplewithend}
In the special case $ M = 2 $, mCCA is exactly standard CCA. This is straightforward as, choosing $ M = 2 $, we obtain the following maximization problem:
\begin{align*}
    &\max_{\mathbf{f}_t}  \mathbf{f}_1^\top C_{12} \ts \mathbf{f}_2 \\
    &\text{s.t. } \mathbf{f}_t^\top C_{tt} \ts \mathbf{f}_t = 1 ~\text{for } t = 1,2.
\end{align*}
Following the derivation above, we can write this as the generalized eigenvalue problem
\begin{align*}
    \begin{bmatrix}
        0 & C_{12} \\
        C_{21} & 0 \\
    \end{bmatrix}
    \begin{bmatrix}
        \mathbf{f}_1 \\ \mathbf{f}_2
    \end{bmatrix}
    = \lambda
    \begin{bmatrix}
        C_{11} & 0 \\
        0 & C_{22} \\
    \end{bmatrix}
    \begin{bmatrix}
        \mathbf{f}_1 \\ \mathbf{f}_2
    \end{bmatrix}.
\end{align*}
From these equations we can show that $ \sigma = \lambda $ by multiplying the first equation by $ \mathbf{f}_1^\top $ since
\begin{equation*}
    \underbrace{\mathbf{f}_1^\top C_{12} \ts \mathbf{f}_{2}}_{=\sigma} = \lambda \underbrace{\mathbf{f}_1^\top C_{11} \ts \mathbf{f}_1}_{=1} = \lambda.
\end{equation*}
Hence, the correlation $ \sigma $ is maximized by the eigenvector corresponding to the largest eigenvalue. We can rewrite the above eigenvalue problem as
\begin{equation*}
    C_{11}^{-1} C_{12} \ts C_{22}^{-1} C_{21} \ts \mathbf{f}_1 = \lambda^2 \ts \mathbf{f}_1 \implies K_1 \ts T_1 \ts \mathbf{f}_1 = \lambda^2 \ts \mathbf{f}_1,
\end{equation*}
which illustrates that $ \mathbf{f}_1 $ is almost invariant under the forward--backward dynamics if there is an eigenvalue $ \lambda \approx 1 $.
\end{examplewithend}

\begin{remark}
    The matrix $\mathbf{C}$ is related to the matrix $T_{rev}$ described in \cite{fackeldey_2019}, although the derivation is notably different. In \cite{fackeldey_2019}, the authors propose a spectral analysis of non-autonomous transfer operators in order to detect coherence in time-dependent dynamical systems. These transfer operators are conceptually similar to those defined in Definitions \ref{def:transfer_operators} and \ref{def: transfer_operators_reweighted}, and in \cite{fackeldey_2019} a discretization in space and time of the operator $\mathcal{T}^{\tau}$ leads to a matrix $T_{rev}$ of dimension $(nM, nM)$, which is exactly the dimension of $\mathbf{C}$. The forward-backward construction of $T_{rev}$ from a time-dependent matrix $T(t) \in \mathbb{R}^{n \times n}$ can be interpreted similarly to $\mathbf{C}$ as written in \eqref{eq:CM_transfer}. Both matrices describe push-forward and pull-back dynamics in the system; in our work, we show that this construction can be formulated via an optimization problem, and in \cite{fackeldey_2019} it is instead motivated by considering a number of time-discretizations and their physical interpretation in the system.
    \end{remark}

\subsection{The Spatio-Temporal Graph Laplacian}

We now use mCCA to define a spatio-temporal graph Laplacian and analyze the eigenspectrum. Additionally, we highlight connections to transfer operators defined in Section~\ref{sec:background}.

\begin{definition}[Spatio-temporal graph Laplacian]
We call the matrix $ \mathbf{L} = I - \mathbf{C} $ the \emph{spatio-temporal graph Laplacian}.
\end{definition}

\begin{proposition} \label{prop:properties}
The matrix $ \mathbf{C} $ has the following properties:
\begin{enumerate}[leftmargin=4.5ex, itemsep=0ex, topsep=0.5ex, label=\roman*)]
\item The eigenvalues of $ \mathbf{C} $ are contained in the interval $ [-1, 1] $.
\item The spectrum of $ \mathbf{C} $ is symmetric about zero.
\end{enumerate}
\end{proposition}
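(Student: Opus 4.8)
The plan is to exploit the block structure of $\mathbf{C}$ in \eqref{eq:CM_transfer}, the fact that each $K_t$ and $T_t$ is (sub)stochastic, and a diagonal similarity transform that symmetrizes $\mathbf{C}$.

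For part (i), the idea is to find a positive diagonal matrix $\mathbf{D}$ such that $\widehat{\mathbf{C}} := \mathbf{D}^{1/2}\mathbf{C}\mathbf{D}^{-1/2}$ is symmetric; then the eigenvalues of $\mathbf{C}$ are real, and it suffices to bound them. The natural choice of weighting on the block indexed by time view $t$ is built from $D_{\mu_t}$ together with the prefactors $\tfrac12$. Concretely, writing $K_t = S_t$ and $T_t = D_{\mu_{t+1}}^{-1}S_t^\top D_{\mu_t}$, one checks that $D_{\mu_t}K_t = (D_{\mu_{t+1}}T_t)^\top = S_t^\top D_{\mu_t}$ wait — rather, $D_{\mu_t} S_t = (D_{\mu_{t+1}} T_t)^\top$, i.e.\ the $(t,t+1)$ and $(t+1,t)$ blocks become transposes of one another after conjugating by $\operatorname{diag}(D_{\mu_1}^{1/2}, c_2 D_{\mu_2}^{1/2}, \dots, c_M D_{\mu_M}^{1/2})$ for suitable scalars $c_t$ absorbing the $\tfrac12$'s. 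After this conjugation $\widehat{\mathbf{C}}$ is a real symmetric matrix whose nonzero blocks are $\alpha_t(D_{\mu_t}^{1/2} S_t D_{\mu_{t+1}}^{-1/2})$ and their transposes, with $\alpha_t \in \{1, \tfrac12\}$. To bound $\|\widehat{\mathbf{C}}\|_2 \le 1$ I would test against an arbitrary vector $\mathbf{g} = (\mathbf{g}_1,\dots,\mathbf{g}_M)$ and estimate the off-diagonal quadratic form $\mathbf{g}^\top \widehat{\mathbf{C}} \mathbf{g} = 2\sum_{t=1}^{M-1}\alpha_t\, \mathbf{g}_t^\top (D_{\mu_t}^{1/2} S_t D_{\mu_{t+1}}^{-1/2})\mathbf{g}_{t+1}$, using Cauchy--Schwarz on each entry (the row sums of $S_t$ weighted appropriately are controlled by $\mu_{t+1} = \mathcal{P}_t \mu_t$) together with the AM--GM inequality $xy \le \tfrac12(x^2+y^2)$; the telescoping of the $\tfrac12$-factors across the chain is exactly what keeps the total bounded by $\mathbf{g}^\top \mathbf{g}$. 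Alternatively — and this is cleaner — one notes that $\mathbf{C} = \mathbf{B}^{-1}\mathbf{A}$ with $\mathbf{B}$ the block-diagonal positive matrix and $\mathbf{A}$ symmetric, so the eigenproblem $\mathbf{C}\mathbf{f}=\lambda\mathbf{f}$ is the generalized symmetric problem $\mathbf{A}\mathbf{f}=\lambda\mathbf{B}\mathbf{f}$; reality of $\lambda$ is immediate, and $|\lambda|\le 1$ follows from showing $-\mathbf{B} \preceq \mathbf{A} \preceq \mathbf{B}$, which unwinds to the same per-block Cauchy--Schwarz/AM--GM estimate but organized as a single PSD statement $\mathbf{B} \pm \mathbf{A} \succeq 0$. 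I would present this second route.

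For part (ii), the symmetry of the spectrum about zero is a standard bipartite-type argument: $\mathbf{C}$ has nonzero entries only between ``adjacent'' time-view blocks, so it is the adjacency-type operator of a bipartite structure once we split the time views into odd and even indices. Let $\mathbf{J} = \operatorname{diag}\big((-1)^{t} I_n\big)_{t=1}^{M}$; then $\mathbf{J}\mathbf{C}\mathbf{J}^{-1} = -\mathbf{C}$ because every nonzero block of $\mathbf{C}$ connects a view $t$ to $t\pm1$, i.e.\ indices of opposite parity, contributing two sign flips to the off-diagonal blocks — wait, that gives $+\mathbf{C}$; the correct choice is $\mathbf{J} = \operatorname{diag}(I_n, -I_n, I_n, -I_n, \dots)$, under which a $(t,t+1)$ block picks up exactly one sign flip, so $\mathbf{J}\mathbf{C}\mathbf{J} = -\mathbf{C}$. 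Hence if $\mathbf{C}\mathbf{f} = \lambda\mathbf{f}$ then $\mathbf{C}(\mathbf{J}\mathbf{f}) = -\lambda(\mathbf{J}\mathbf{f})$, so $-\lambda$ is an eigenvalue with the same multiplicity, which is the claim. (One should note the statement implicitly concerns $M \ge 2$; for $M=2$ it reduces to the CCA case already discussed.)

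The main obstacle is part (i): getting the constant exactly $1$ rather than something like $M$ or $2$. The per-block Cauchy--Schwarz bound alone would only give $|\lambda| \le \max_t (1+\text{something})$; it is the interplay of the $\tfrac12$ prefactors with the reweighting $\mu_{t+1}=\mathcal P_t\mu_t$ — so that column sums of $D_{\mu_t}^{1/2} S_t D_{\mu_{t+1}}^{-1/2}$ against the weight vector are exactly $1$ — that makes the telescoping sum close up at $1$. I would make sure to isolate the identity $\mathbf 1^\top D_{\mu_{t+1}} = \mathbf 1^\top D_{\mu_t} S_t$ (equivalently $\boldsymbol\mu_{t+1} = S_t^\top\boldsymbol\mu_t$, which is the definition of the reweighted chain) as the key lemma, since that is precisely what pins the Perron-type bound to $1$ and is the one place the hypothesis $\mu_1$ uniform / $\mu_t>0$ actually gets used in a nontrivial way.
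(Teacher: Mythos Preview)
Your part (ii) is essentially identical to the paper's: the paper writes out the eigenvector $\mathbf{g} = (\mathbf{f}_1, -\mathbf{f}_2, \mathbf{f}_3, \dots)$ and verifies $\mathbf{C}\mathbf{g} = -\lambda\mathbf{g}$ block by block, which is your conjugation $\mathbf{J}\mathbf{C}\mathbf{J} = -\mathbf{C}$ expanded in coordinates.

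For part (i) your route is correct but different and considerably longer than the paper's. You both obtain reality of the spectrum from the generalized symmetric eigenproblem $\mathbf{A}\mathbf{f}=\lambda\mathbf{B}\mathbf{f}$ with $\mathbf{A}$ symmetric and $\mathbf{B}$ diagonal positive definite. For the bound $|\lambda|\le 1$, however, the paper simply observes that $\mathbf{C}$ is row-stochastic: every block $K_t$ and $T_t$ has nonnegative entries with $K_t\mathbbm{1}=T_t\mathbbm{1}=\mathbbm{1}$, and the $\tfrac12$ prefactors in the interior block rows make each full row of $\mathbf{C}$ sum to $1$; hence $\rho(\mathbf{C})\le\|\mathbf{C}\|_\infty=1$. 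Your ``key lemma'' $\boldsymbol\mu_{t+1}=S_t^\top\boldsymbol\mu_t$ is precisely what makes $T_t$ row-stochastic, so you already had the crucial ingredient but deployed it inside a Cauchy--Schwarz/AM--GM estimate for $-\mathbf{B}\preceq\mathbf{A}\preceq\mathbf{B}$ rather than reading off the $\ell^\infty$ norm directly. Your argument does work (the per-block inequality $2|\mathbf{f}_t^\top C_{t(t+1)}\mathbf{f}_{t+1}|\le \mathbf{f}_t^\top C_{tt}\mathbf{f}_t+\mathbf{f}_{t+1}^\top C_{(t+1)(t+1)}\mathbf{f}_{t+1}$ telescopes exactly to $\mathbf{f}^\top\mathbf{B}\mathbf{f}$), and it has the mild advantage of yielding the variational inequality $|\mathbf{f}^\top\mathbf{A}\mathbf{f}|\le\mathbf{f}^\top\mathbf{B}\mathbf{f}$ in the natural weighted inner product; the paper's route, by contrast, is a one-line observation. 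What you flagged as ``the main obstacle'' --- getting the constant exactly $1$ --- disappears entirely once you notice the row-stochasticity.
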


\begin{proof} ~
\begin{enumerate}[wide, itemindent=2ex, itemsep=0ex, topsep=0.5ex, label=\roman*)]
\item We first show that the eigenvalues are real-valued. It is well-known that the eigenvalues of the generalized eigenproblem~\eqref{eq:mcca_gen_eigenproblem} are real-valued if $ \mathbf{A} $ and $ \mathbf{B} $ are both symmetric and $ \mathbf{B} $ is also positive definite. Since $ C_{ij} = C_{ji}^\top $, the matrix $ \mathbf{A} $ is symmetric. Also, $ \mathbf{B} $ is a diagonal matrix whose diagonal entries are positive since $ \mu_t $ is by assumption strictly positive for all~$ t $. Therefore, $ \mathbf{B} $ is positive definite and the eigenvalues of \eqref{eq:mcca_gen_eigenproblem} and thus also \eqref{eq:mcca_eigenproblem} are real-valued. Next, we prove that the spectral radius of $ \mathbf{C} $ is 1, which we do by showing that the matrix is row-stochastic. This is straightforward using the fact that $ K_t \mathbbm{1} = T_t \mathbbm{1} = \mathbbm{1} $ for all $1 \leq t \leq M$, where $\mathbbm{1} $ is a vector of ones.
\item Let $ \lambda $ be an eigenvalue of $ \mathbf{C} $ with associated eigenvector $ \mathbf{f} = [\mathbf{f}_1^\top, \dots, \mathbf{f}_M^\top]^\top $, that is, $ \mathbf{C} \ts \mathbf{f} = \lambda \ts \mathbf{f} $. Define $ \mathbf{g} = [\mathbf{f}_1^\top, -\mathbf{f}_2^\top, \mathbf{f}_3^\top, -\mathbf{f}_4^\top, \dots]^\top $. Using \eqref{eq:CM_transfer}, it follows that
\begin{equation*}
    \mathbf{C} \ts \mathbf{g} =
    \begin{bmatrix}
        -K_1 \ts \mathbf{f}_2 \\
        \frac{1}{2} T_1 \ts \mathbf{f}_1 + \frac{1}{2} K_2 \ts \mathbf{f}_3 \\[0.5ex]
        -\frac{1}{2} T_2 \ts \mathbf{f}_2 - \frac{1}{2} K_3 \ts \mathbf{f}_4 \\[0.5ex]
        \frac{1}{2} T_3 \ts \mathbf{f}_3 + \frac{1}{2} K_4 \ts \mathbf{f}_5 \\
        \vdots
    \end{bmatrix}
    =
    \begin{bmatrix}
        -\lambda \ts \mathbf{f}_1 \\[0.5ex]
        \lambda \ts \mathbf{f}_2 \\[0.5ex]
        -\lambda \ts \mathbf{f}_3 \\[0.5ex]
        \lambda \ts \mathbf{f}_4 \\
        \vdots
    \end{bmatrix}
    = - \lambda
    \begin{bmatrix}
        \mathbf{f}_1 \\[0.5ex]
        -\mathbf{f}_2 \\[0.5ex]
        \mathbf{f}_3 \\[0.5ex]
        -\mathbf{f}_4 \\ \vdots
    \end{bmatrix} = - \lambda \mathbf{g}.
\end{equation*}
Therefore, $ \mathbf{g} $ is also an eigenvector of $ \mathbf{C} $ with associated eigenvalue $ -\lambda $. \qedhere
\end{enumerate}
\end{proof}

Note that we are only interested in the positive eigenvalues since negative eigenvalues correspond to negatively correlated functions.

\begin{remark} \label{remark:mcca_transfer}
Assume that $ \lambda \approx 1 $. For the special case $ M = 2 $, this means that we find functions that are almost invariant under the forward--backward dynamics, see \cite{klus_2024}. For the general case, the interpretation of the matrix $ \mathbf{C} $ is similar. The first equation yields $ K_1 \mathbf{f}_2 \approx \mathbf{f}_1 $, i.e., pulling $ \mathbf{f}_2 $ back using the Koopman operator must be highly correlated with $ \mathbf{f}_1 $. The following equations, $ \frac{1}{2} T_{t-1} \mathbf{f}_{t-1} + \frac{1}{2} K_t \mathbf{f}_{t+1} \approx \mathbf{f}_t $ implies that if we push $ \mathbf{f}_{t-1} $ forward and pull $ \mathbf{f}_{t+1} $ back and then average, we should approximately obtain $ \mathbf{f}_{t} $. The last equation suggests that pushing $ \mathbf{f}_{M-1} $ forward should result in $ \mathbf{f}_{M} $.
\end{remark}

\begin{corollary} \label{cor:L_CM}
The eigenvalues of the spatio-temporal graph Laplacian $ \mathbf{L} $ are contained in the interval $ [0, 2] $ and the spectrum is symmetric about $ 1 $.
\end{corollary}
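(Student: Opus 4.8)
The plan is to deduce this directly from Proposition~\ref{prop:properties} together with the definition $\mathbf{L} = I - \mathbf{C}$, using the elementary fact that an affine shift of a matrix shifts its spectrum in a controlled way while leaving eigenvectors unchanged.

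First I would record the spectral correspondence: if $\mathbf{f}$ is an eigenvector of $\mathbf{C}$ with eigenvalue $\lambda$, then $\mathbf{L}\,\mathbf{f} = (I - \mathbf{C})\,\mathbf{f} = \mathbf{f} - \lambda\,\mathbf{f} = (1-\lambda)\,\mathbf{f}$, so $\mathbf{f}$ is an eigenvector of $\mathbf{L}$ with eigenvalue $1-\lambda$; conversely the same computation run backwards shows every eigenpair of $\mathbf{L}$ arises this way. Hence the spectrum of $\mathbf{L}$ is exactly the image of the spectrum of $\mathbf{C}$ under the bijection $\lambda \mapsto 1-\lambda$ (with multiplicities preserved, since the eigenspaces coincide).

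Next I would push the two parts of Proposition~\ref{prop:properties} through this map. For part i): since every eigenvalue $\lambda$ of $\mathbf{C}$ lies in $[-1,1]$, the eigenvalue $1-\lambda$ of $\mathbf{L}$ lies in $[1-1,\,1-(-1)] = [0,2]$, which gives the claimed containment. For part ii): symmetry of the spectrum of $\mathbf{C}$ about $0$ means $\lambda$ is an eigenvalue iff $-\lambda$ is; applying the shift, $1-\lambda$ is an eigenvalue of $\mathbf{L}$ iff $1+\lambda$ is, and these two values are reflections of one another about $1$. Therefore the spectrum of $\mathbf{L}$ is symmetric about $1$.

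There is essentially no obstacle here: the argument is a one-line consequence of the already-established Proposition~\ref{prop:properties}. The only point worth stating carefully is that the correspondence between eigenpairs of $\mathbf{C}$ and $\mathbf{L}$ is a genuine bijection (so that \emph{all} eigenvalues of $\mathbf{L}$ are accounted for, not just some), which is immediate because $\mathbf{L}$ and $\mathbf{C}$ commute and differ by a scalar multiple of the identity.
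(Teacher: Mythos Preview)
Your proposal is correct and follows exactly the same approach as the paper: both deduce the result immediately from Proposition~\ref{prop:properties} via the observation that $\mathbf{C}\,\mathbf{f} = \lambda\,\mathbf{f}$ implies $\mathbf{L}\,\mathbf{f} = (I-\mathbf{C})\,\mathbf{f} = (1-\lambda)\,\mathbf{f}$. The paper states this in a single line, while you spell out the bijection of spectra and the image of $[-1,1]$ under $\lambda \mapsto 1-\lambda$ more explicitly, but the substance is identical.
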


\begin{proof}
This follows immediately from Proposition~\ref{prop:properties} since
\begin{equation*}
    \mathbf{C} \ts \mathbf{f} = \lambda \ts \mathbf{f} \implies \mathbf{L} \ts \mathbf{f} = (I - \mathbf{C}) \ts \mathbf{f} = (1 - \lambda) \ts \mathbf{f}. \qedhere
\end{equation*}
\end{proof}

Using Corollary~\ref{cor:L_CM}, it is clear that we can maximize the correlation between the functions~$ f_t $ by computing the eigenvector corresponding to the smallest eigenvalue of the spatio-temporal graph Laplacian. Let us now investigate the connections between the spatio-temporal graph Laplacian $ \mathbf{L} $ and the original time-evolving graph by analyzing the structure of the matrix $ \mathbf{C} = \mathbf{B}^{-1} \mathbf{A} $. Given a time-evolving graph $ \mc{G}^{(M)} = \{\mc[1]{G}, \dots, \mc[M]{G}\} $ with vertices $ \mc{V} = \{\mc[1]{v}, \dots, \mc[n]{v}\} $, we define a static graph $ \Wtilde = (\Vtilde, \Etilde) $ using the matrix $ \mathbf{A} $ associated with $ \mc{G}^{(M)} $ as an adjacency matrix. We refer to $ \Wtilde $ as a \emph{coupling graph}, as it couples all time views into a single representation. This coupling graph contains $ M $ copies of the $ n $ vertices, so $ \Vtilde = \big\{\mc[1]{v}^{(1)}, \dots, \mc[n]{v}^{(1)}, \dots, \mc[1]{v}^{(M)}, \dots, \mc[n]{v}^{(M)}\big\} $.

\begin{lemma} \label{lemma:mcca_structure}
Given $ \mc{G}^{(M)} $ with associated coupling graph $\Wtilde$, there exists an edge
\begin{equation*}
    \big(\mc[\alpha]{v}^{(t)}, \mc[\beta]{v}^{(t+1)}\big) \in \, \Etilde
\end{equation*}
if and only if there exists an edge $ (\mc[\alpha]{v}, \mc[\beta]{v}) $ in $ \mc[t]{G} $. Further, the edge in $\Wtilde$ is always undirected.
\end{lemma}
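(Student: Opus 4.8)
The plan is a direct computation with the block structure of $\mathbf{A}$. Identify the vertex $\mc[\alpha]{v}^{(t)} \in \Vtilde$ with the row index $(t-1)n + \alpha$ of $\mathbf{A}$; then an edge $\big(\mc[\alpha]{v}^{(t)}, \mc[\beta]{v}^{(t')}\big)$ lies in $\Etilde$ exactly when the $(\alpha,\beta)$-entry of the $(t,t')$-block of $\mathbf{A}$ is nonzero, and its weight in $\Wtilde$ equals that entry. Since $\mathbf{A}$ is block-tridiagonal with vanishing diagonal blocks, the only block relating view $t$ to a later view is the $(t,t+1)$-block, which equals $C_{t(t+1)} = D_{\mu_t} S_t$. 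In particular this already shows that edges of $\Wtilde$ only ever connect consecutive views, and that the candidate edge in the statement has weight $(C_{t(t+1)})_{\alpha\beta}$.

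Next I would evaluate $(C_{t(t+1)})_{\alpha\beta} = \mu_t(\mc[\alpha]{v}) \, (S_t)_{\alpha\beta}$. The standing assumption that all reference densities are strictly positive gives $\mu_t(\mc[\alpha]{v}) > 0$, so this entry is nonzero if and only if $(S_t)_{\alpha\beta} \neq 0$. Because $S_t$ arises from $W_t$ by rescaling each row by a positive factor (the reciprocal out-degree, finite and positive whenever $S_t$ is well defined), the matrices $S_t$ and $W_t$ have identical zero/nonzero patterns; hence $(S_t)_{\alpha\beta} \neq 0$ precisely when $(W_t)_{\alpha\beta} \neq 0$, which by definition of $W_t$ happens precisely when $(\mc[\alpha]{v}, \mc[\beta]{v})$ is an edge of $\mc[t]{G}$. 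Chaining these equivalences yields the stated bi-implication.

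For the undirectedness claim I would compare the two directed weights between $\mc[\alpha]{v}^{(t)}$ and $\mc[\beta]{v}^{(t+1)}$ in $\Wtilde$. The weight from $\mc[\alpha]{v}^{(t)}$ to $\mc[\beta]{v}^{(t+1)}$ is $(C_{t(t+1)})_{\alpha\beta}$, whereas the weight from $\mc[\beta]{v}^{(t+1)}$ back to $\mc[\alpha]{v}^{(t)}$ is the $(\beta,\alpha)$-entry of the $(t+1,t)$-block, namely $(C_{(t+1)t})_{\beta\alpha} = (C_{t(t+1)}^\top)_{\beta\alpha} = (C_{t(t+1)})_{\alpha\beta}$, using $C_{ij} = C_{ji}^\top$ (equivalently, the symmetry of $\mathbf{A}$ already noted in the proof of Proposition~\ref{prop:properties}). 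The two weights coincide, so the edge is undirected; note that this conclusion used only $W_t$ and not any symmetry of $W_t$, so it holds even when $\mc[t]{G}$ is directed.

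I do not anticipate a genuine obstacle. The only points requiring care are the index bookkeeping that pairs the $(t,t+1)$-block of $\mathbf{A}$ with $C_{t(t+1)}$ (and not with $C_{(t+1)t}$), and the implicit requirement that each $S_t$ be well defined, i.e.\ that no vertex have zero out-degree at view $t$ — which can be treated as a standing hypothesis or, alternatively, ensured by self-loops as in Example~\ref{example1}.
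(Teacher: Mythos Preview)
Your proposal is correct and follows essentially the same route as the paper: both identify the $(t,t+1)$-block of $\mathbf{A}$ as $C_{t(t+1)} = D_{\mu_t} S_t$, use the strict positivity of $\mu_t$ to pass from a nonzero entry of $C_{t(t+1)}$ to one of $S_t$ (hence $W_t$), and deduce undirectedness from $C_{(t+1)t} = C_{t(t+1)}^\top$, i.e.\ the symmetry of $\mathbf{A}$. Your write-up is in fact slightly more careful than the paper's about the index conventions and about why $S_t$ and $W_t$ share the same sparsity pattern.
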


\begin{proof}
Recalling the definition of $\mathbf{A}$ from \eqref{eq:mcca_gen_eigenproblem} and assuming that $\mu_t$ is strictly positive for all $t$, we have
\begin{align*}
    (\mc[\alpha]{v}^{(t)}, \mc[\beta]{v}^{(t+1)}) \in \Wtilde &\iff \mathbf{A}_{(n \cdot t + \alpha)(n \cdot (t+1) +\beta)} = (C_{t(t+1)})_{\alpha \beta} = (D_{\mu_t} S_t)_{\alpha\beta} \neq 0 \\
    &\iff (S_t)_{\alpha\beta} \neq 0 \\
    &\iff (\mc[\alpha]{v}, \mc[\beta]{v}) \in \mc[t]{G}.
\end{align*}
Now, note that since $C_{t(t+1)}^\top = C_{(t+1)t}$, the matrix $\mathbf{A}$ is symmetric. Therefore an edge in $\Wtilde$ is necessarily undirected as we construct the graph using $\mathbf{A}$ as the adjacency matrix.
\end{proof}

Constructing a coupling graph in this way, it is clear that cluster structure in the original time-evolving graph is preserved. The new graph contains edges between copies of vertices at adjacent time views, so we can detect the changing nature of the clusters over time as these connections appear or disappear from one view to the next.

\begin{examplewithend} \label{example2}
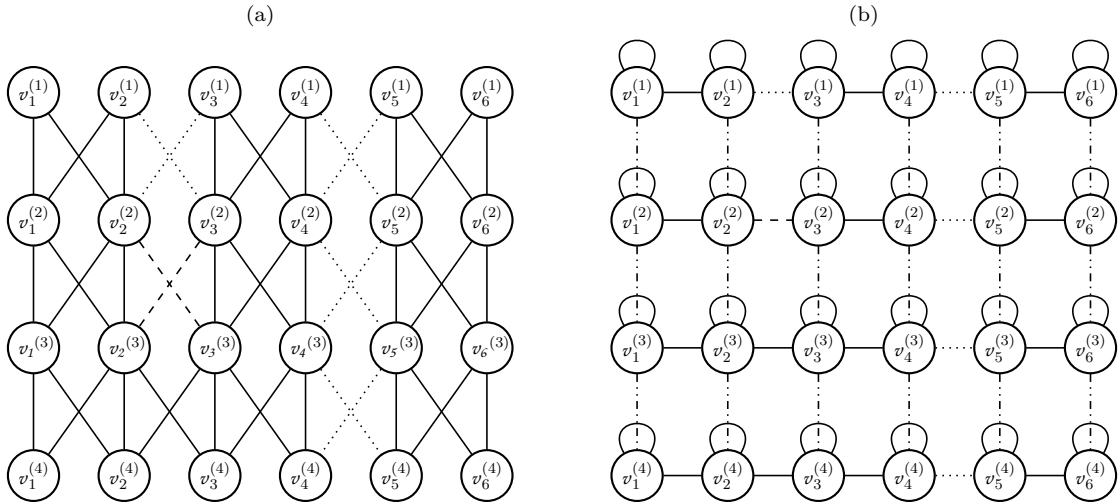
\begin{figure}
    \centering
    \begin{minipage}[b]{0.48\textwidth}
        \centering
        \subfiguretitle{(a)}
        \vspace{3ex}
        \begin{tikzpicture}[
                >= stealth, 
                semithick 
            ]
            \tikzstyle{every state}=[
                draw=black,
                thick,
                fill=white,
                inner sep=0pt,
                text width=6mm,
                align=center,
                scale=0.7
            ]
            \tikzset{every loop/.style={}} 
            \node[state] (v1) {$\mc[1]{v}^{(1)}$};
            \node[state] (v2) [right=0.5cm of v1] {$\mc[2]{v}^{(1)}$};
            \node[state] (v3) [right=0.5cm of v2] {$\mc[3]{v}^{(1)}$};
            \node[state] (v4) [right=0.5cm of v3] {$\mc[4]{v}^{(1)}$};
            \node[state] (v5) [right=0.5cm of v4] {$\mc[5]{v}^{(1)}$};
            \node[state] (v6) [right=0.5cm of v5] {$\mc[6]{v}^{(1)}$};
            \node[state] (v7) [below=1cm of v1]{$\mc[1]{v}^{(2)}$};
            \node[state] (v8) [right=0.5cm of v7] {$\mc[2]{v}^{(2)}$};
            \node[state] (v9) [right=0.5cm of v8] {$\mc[3]{v}^{(2)}$};
            \node[state] (v10) [right=0.5cm of v9] {$\mc[4]{v}^{(2)}$};
            \node[state] (v11) [right=0.5cm of v10] {$\mc[5]{v}^{(2)}$};
            \node[state] (v12) [right=0.5cm of v11] {$\mc[6]{v}^{(2)}$};

            \path[-] (v1) edge node {} (v7);
            \path[-] (v2) edge node {} (v8);
            \path[-] (v3) edge node {} (v9);
            \path[-] (v4) edge node {} (v10);
            \path[-] (v5) edge node {} (v11);
            \path[-] (v6) edge node {} (v12);

            \path[-] (v1) edge node {} (v8);
            \path[-] (v2) edge node {} (v7);
            \path[-, dotted] (v2) edge node {} (v9);
            \path[-, dotted] (v3) edge node {} (v8);
            \path[-] (v3) edge node {} (v10);
            \path[-] (v4) edge node {} (v9);
            \path[-, dotted] (v4) edge node {} (v11);
            \path[-, dotted] (v5) edge node {} (v10);
            \path[-] (v5) edge node {} (v12);
            \path[-] (v6) edge node {} (v11);

            \node[state] (v13) [below=1cm of v7]{$\mc{v_1}^{(3)}$};
            \node[state] (v14) [right=0.5cm of v13] {$\mc{v_2}^{(3)}$};
            \node[state] (v15) [right=0.5cm of v14] {$\mc{v_3}^{(3)}$};
            \node[state] (v16) [right=0.5cm of v15] {$\mc{v_4}^{(3)}$};
            \node[state] (v17) [right=0.5cm of v16] {$\mc{v_5}^{(3)}$};
            \node[state] (v18) [right=0.5cm of v17] {$\mc{v_6}^{(3)}$};

            \path[-] (v7) edge node {} (v13);
            \path[-] (v8) edge node {} (v14);
            \path[-] (v9) edge node {} (v15);
            \path[-] (v10) edge node {} (v16);
            \path[-] (v11) edge node {} (v17);
            \path[-] (v12) edge node {} (v18);

            \path[-] (v7) edge node {} (v14);
            \path[-] (v8) edge node {} (v13);
            \path[-, dashed] (v8) edge node {} (v15);
            \path[-, dashed] (v9) edge node {} (v14);
            \path[-] (v9) edge node {} (v16);
            \path[-] (v10) edge node {} (v15);
            \path[-, dotted] (v10) edge node {} (v17);
            \path[-, dotted] (v11) edge node {} (v16);
            \path[-] (v11) edge node {} (v18);
            \path[-] (v12) edge node {} (v17);

            \node[state] (v19) [below=1cm of v13] {$\mc[1]{v}^{(4)}$};
            \node[state] (v20) [right=0.5cm of v19] {$\mc[2]{v}^{(4)}$};
            \node[state] (v21) [right=0.5cm of v20] {$\mc[3]{v}^{(4)}$};
            \node[state] (v22) [right=0.5cm of v21] {$\mc[4]{v}^{(4)}$};
            \node[state] (v23) [right=0.5cm of v22] {$\mc[5]{v}^{(4)}$};
            \node[state] (v24) [right=0.5cm of v23] {$\mc[6]{v}^{(4)}$};

            \path[-] (v13) edge node {} (v19);
            \path[-] (v14) edge node {} (v20);
            \path[-] (v15) edge node {} (v21);
            \path[-] (v16) edge node {} (v22);
            \path[-] (v17) edge node {} (v23);
            \path[-] (v18) edge node {} (v24);

            \path[-] (v13) edge node {} (v20);
            \path[-] (v14) edge node {} (v19);
            \path[-] (v14) edge node {} (v21);
            \path[-] (v15) edge node {} (v20);
            \path[-] (v15) edge node {} (v22);
            \path[-] (v16) edge node {} (v21);
            \path[-, dotted] (v16) edge node {} (v23);
            \path[-, dotted] (v17) edge node {} (v22);
            \path[-] (v17) edge node {} (v24);
            \path[-] (v18) edge node {} (v23);
        \end{tikzpicture}
    \end{minipage}
    \hfill
    \begin{minipage}[b]{0.48\textwidth}
    \centering
        \subfiguretitle{(b)}
        \vspace{-0.7ex}
        \begin{tikzpicture}[
                >= stealth, 
                semithick 
            ]
            \tikzstyle{every state}=[
                draw=black,
                thick,
                fill=white,
                inner sep=0pt,
                text width=6mm,
                align=center,
                scale=0.7
            ]
            \tikzset{every loop/.style={}} 

            \node[state] (v25) {$\mc[1]{v}^{(1)}$};
            \node[state] (v26) [right=0.5cm of v25] {$\mc[2]{v}^{(1)}$};
            \node[state] (v27) [right=0.5cm of v26] {$\mc[3]{v}^{(1)}$};
            \node[state] (v28) [right=0.5cm of v27] {$\mc[4]{v}^{(1)}$};
            \node[state] (v29) [right=0.5cm of v28] {$\mc[5]{v}^{(1)}$};
            \node[state] (v30) [right=0.5cm of v29] {$\mc[6]{v}^{(1)}$};

            \path[-] (v25) edge node {} (v26);
            \path[-, dotted] (v26) edge node {} (v27);
            \path[-] (v27) edge node {} (v28);
            \path[-, dotted] (v28) edge node {} (v29);
            \path[-] (v29) edge node {} (v30);

            \path (v25) edge [loop above, min distance=6mm, in=60, out=120] node {} (v25);
            \path (v26) edge [loop above, min distance=6mm, in=60, out=120] node {} (v26);
            \path (v27) edge [loop above, min distance=6mm, in=60, out=120] node {} (v27);
            \path (v28) edge [loop above, min distance=6mm, in=60, out=120] node {} (v28);
            \path (v29) edge [loop above, min distance=6mm, in=60, out=120] node {} (v29);
            \path (v30) edge [loop above, min distance=6mm, in=60, out=120] node {} (v30);

            \node[state] (v31) [below=1cm of v25]{$\mc[1]{v}^{(2)}$};
            \node[state] (v32) [right=0.5cm of v31] {$\mc[2]{v}^{(2)}$};
            \node[state] (v33) [right=0.5cm of v32] {$\mc[3]{v}^{(2)}$};
            \node[state] (v34) [right=0.5cm of v33] {$\mc[4]{v}^{(2)}$};
            \node[state] (v35) [right=0.5cm of v34] {$\mc[5]{v}^{(2)}$};
            \node[state] (v36) [right=0.5cm of v35] {$\mc[6]{v}^{(2)}$};

            \path[-] (v31) edge node {} (v32);
            \path[-, dashed] (v32) edge node {} (v33);
            \path[-] (v33) edge node {} (v34);
            \path[-, dotted] (v34) edge node {} (v35);
            \path[-] (v35) edge node {} (v36);

            \path (v31) edge [loop above, min distance=6mm, in=60, out=120] node {} (v31);
            \path (v32) edge [loop above, min distance=6mm, in=60, out=120] node {} (v32);
            \path (v33) edge [loop above, min distance=6mm, in=60, out=120] node {} (v33);
            \path (v34) edge [loop above, min distance=6mm, in=60, out=120] node {} (v34);
            \path (v35) edge [loop above, min distance=6mm, in=60, out=120] node {} (v35);
            \path (v36) edge [loop above, min distance=6mm, in=60, out=120] node {} (v36);

            \path[-, dash dot dot] (v25) edge node {} (v31);
            \path[-, dash dot dot] (v26) edge node {} (v32);
            \path[-, dash dot dot] (v27) edge node {} (v33);
            \path[-, dash dot dot] (v28) edge node {} (v34);
            \path[-, dash dot dot] (v29) edge node {} (v35);
            \path[-, dash dot dot] (v30) edge node {} (v36);

            \node[state] (v37) [below=1cm of v31]{$\mc[1]{v}^{(3)}$};
            \node[state] (v38) [right=0.5cm of v37] {$\mc[2]{v}^{(3)}$};
            \node[state] (v39) [right=0.5cm of v38] {$\mc[3]{v}^{(3)}$};
            \node[state] (v40) [right=0.5cm of v39] {$\mc[4]{v}^{(3)}$};
            \node[state] (v41) [right=0.5cm of v40] {$\mc[5]{v}^{(3)}$};
            \node[state] (v42) [right=0.5cm of v41] {$\mc[6]{v}^{(3)}$};

            \path[-] (v37) edge node {} (v38);
            \path[-] (v38) edge node {} (v39);
            \path[-] (v39) edge node {} (v40);
            \path[-, dotted] (v40) edge node {} (v41);
            \path[-] (v41) edge node {} (v42);

            \path (v37) edge [loop above, min distance=6mm, in=60, out=120] node {} (v37);
            \path (v38) edge [loop above, min distance=6mm, in=60, out=120] node {} (v38);
            \path (v39) edge [loop above, min distance=6mm, in=60, out=120] node {} (v39);
            \path (v40) edge [loop above, min distance=6mm, in=60, out=120] node {} (v40);
            \path (v41) edge [loop above, min distance=6mm, in=60, out=120] node {} (v41);
            \path (v42) edge [loop above, min distance=6mm, in=60, out=120] node {} (v42);

            \path[-, dash dot dot] (v31) edge node {} (v37);
            \path[-, dash dot dot] (v32) edge node {} (v38);
            \path[-, dash dot dot] (v33) edge node {} (v39);
            \path[-, dash dot dot] (v34) edge node {} (v40);
            \path[-, dash dot dot] (v35) edge node {} (v41);
            \path[-, dash dot dot] (v36) edge node {} (v42);

            \node[state] (v43) [below=1cm of v37] {$\mc[1]{v}^{(4)}$};
            \node[state] (v44) [right=0.5cm of v43] {$\mc[2]{v}^{(4)}$};
            \node[state] (v45) [right=0.5cm of v44] {$\mc[3]{v}^{(4)}$};
            \node[state] (v46) [right=0.5cm of v45] {$\mc[4]{v}^{(4)}$};
            \node[state] (v47) [right=0.5cm of v46] {$\mc[5]{v}^{(4)}$};
            \node[state] (v48) [right=0.5cm of v47] {$\mc[6]{v}^{(4)}$};

            \path[-] (v43) edge node {} (v44);
            \path[-] (v44) edge node {} (v45);
            \path[-] (v45) edge node {} (v46);
            \path[-, dotted] (v46) edge node {} (v47);
            \path[-] (v47) edge node {} (v48);

            \path (v43) edge [loop above, min distance=6mm, in=60, out=120] node {} (v43);
            \path (v44) edge [loop above, min distance=6mm, in=60, out=120] node {} (v44);
            \path (v45) edge [loop above, min distance=6mm, in=60, out=120] node {} (v45);
            \path (v46) edge [loop above, min distance=6mm, in=60, out=120] node {} (v46);
            \path (v47) edge [loop above, min distance=6mm, in=60, out=120] node {} (v47);
            \path (v48) edge [loop above, min distance=6mm, in=60, out=120] node {} (v48);

            \path[-, dash dot dot] (v37) edge node {} (v43);
            \path[-, dash dot dot] (v38) edge node {} (v44);
            \path[-, dash dot dot] (v39) edge node {} (v45);
            \path[-, dash dot dot] (v40) edge node {} (v46);
            \path[-, dash dot dot] (v41) edge node {} (v47);
            \path[-, dash dot dot] (v42) edge node {} (v48);
        \end{tikzpicture}
    \end{minipage}
    \caption{(a) Coupling graph with adjacency matrix $ \mathbf{A} $. Vertices are connected to their copies in adjacent time views because of the presence of self-loops in the original graph, and the changing cluster structure of the original graph is visible. (b) Coupling graph associated with the supra-Laplacian. Here, the edge weights between vertices and their copies in adjacent time views have weight $ a $, represented by the dashed-dotted line.}
    \label{fig:mcca_edges2}
\end{figure}

We illustrate the structure of the coupling graph $ \Wtilde $ associated with the spatio-temporal graph Laplacian of the line graph introduced in Example~\ref{example1}, where self-loops are added to the line graph for regularization of the coupling graph. Figure~\ref{fig:mcca_edges2}\ts(a) shows the edges in the graph, and the time evolution of the clusters can be observed.
\end{examplewithend}

\subsection{Comparison with the Supra-Laplacian}
As a comparison, we briefly describe the supra-Laplacian defined as in~\cite{gomez_2013}. This approach constructs a Laplacian by computing a graph Laplacian at each snapshot, and then using these matrices as diagonal blocks for a matrix of size $ (n \cdot M) \times (n \cdot M) $. This matrix is referred to as the \emph{supra-Laplacian}, and the sub- and super-diagonal $ n \times n $ blocks of this matrix contain copies of $ a I_n $, where $ a $ is the coupling strength between each snapshot. For simplicity we use a constant coupling parameter for every snapshot, but note that in general the coupling can be time-dependent.
\begin{examplewithend}
We continue with Example~\ref{example2} and Figure~\ref{fig:mcca_edges2}\ts(b) illustrates the relationships between layers of the time-evolving line graph when the layers are coupled using coupling parameter $ a $. A clear difference between the supra-Laplacian coupling graph and the spatio-temporal graph Laplacian coupling graph is that an edge $ (\mc[\alpha]{v}, \mc[\beta]{v}) $ in $ \mc[t]{G} $ translates to an edge $ (\mc[n \cdot t + \alpha]{v}, \mc[n \cdot t + \beta]{v}) $ in the static coupling graph associated with the supra-Laplacian.
\end{examplewithend}

It is well-known that, using this approach, the eigenvalues and eigenvectors are strongly dependent on the choice of coupling parameter~\cite{gomez_2013}. In general, a strong coupling between time layers leads to a spatial clustering that detects intra-layer vertex clusters, and in the limit of large values of $ a $, this leads to a temporal aggregation of the network. On the other hand, a weak coupling leads to a temporal clustering, where vertices are clustered together because they exist in the same time view, rather than belonging to the same community over time. Therefore, the parameter $ a $ must be tuned and the optimal value is highly problem-specific. The spatio-temporal graph Laplacian, on the other hand, does not require any parameter tuning. We use $ \widehat{\mc{G}} $ to denote the coupling graph associated with the supra-Laplacian, and write $ \mathbf{L}_{S} $ for the supra-Laplacian matrix.

\begin{remark} \label{remark:mCCA_directed}
Constructing a coupling graph $\Wtilde$ using the matrix $ \mathbf{A} $ associated with the spatio-temporal graph Laplacian always produces an undirected graph, even if $ \mc{G}^{(M)} $ is directed (see Lemma~\ref{lemma:mcca_structure}). In contrast, if $ \mc{G}^{(M)} $ is directed, then the coupling graph $ \widehat{\mc{G}} $ associated with the supra-Laplacian is directed. The asymmetry of $ \mathbf{L}_S $ means that the eigenvalues are in general complex-valued, and standard spectral clustering algorithms will fail~\cite{klus_2024_review}. In order to apply spectral clustering, some form of symmetrization or other procedure is required. This shows that the spatio-temporal graph Laplacian can naturally cluster directed time-evolving graphs without losing directionality information which is a key advantage over the supra-Laplacian for spectral clustering.
\end{remark}

\subsection{Spectral Clustering for Time-Evolving Graphs} \label{sec:algorithm}

The spectral clustering approach to community discovery on graphs is based on clustering the eigenvectors associated with the smallest eigenvalues of an associated graph Laplacian. There are many different matrices that are defined in the literature as graph Laplacians, including the unnormalized graph Laplacian and the random-walk Laplacian~\cite{luxburg_2007}. Extensions to spectral clustering capable of community detection in directed graphs include various symmetrization techniques~\cite{satuluri_2011}, complex-valued representations of the graph~\cite{cucuringu_2020}, and a transfer operator-based Laplacian describing the forward--backward dynamics on a graph~\cite{klus_2024}.

In a similar way to these methods, we use the spatio-temporal graph Laplacian to detect clusters in time-evolving graphs. As shown in Corollary~\ref{cor:L_CM}, computing the smallest eigenvalues of this Laplacian $ \mathbf{L} $ is equivalent to computing the largest eigenvalues of $ \mathbf{C} $. The eigenvalues of $ \mathbf{L} $ and $ \mathbf{C} $ are always real-valued, even when the graph itself is directed, and hence we are able to apply spectral clustering also to directed time-evolving graphs.

\begin{algorithm}[Spatio-temporal graph Laplacian spectral clustering algorithm] \label{alg:mCCA_sc} ~
\begin{enumerate}[topsep=1ex, itemsep=0ex]
\item Compute the $ k $ largest eigenvalues $ \lambda_\ell $ and associated eigenvectors $ \boldsymbol{\varphi}_\ell $ of $ \mathbf{C}$.
\item Define $ \boldsymbol{\Phi} = [\boldsymbol{\varphi}_1, \dots, \boldsymbol{\varphi}_k] \in \R^{Mn \times k} $ and let $ r_i $ denote the $ i $th row of $ \boldsymbol{\Phi} $.
\item Cluster the points $ \{ \ts r_i \ts \}_{i=1}^{Mn} $ using, e.g., $ k $-means.
\end{enumerate}
\end{algorithm}

Note that for static graphs, the number of eigenvalues of the transition matrix close to 1 indicates the number of clusters, $ k $, to detect, with a spectral gap between $ \lambda_k $ and $ \lambda_{k+1} $. However, the eigenvalues of $ \mathbf{C} $ contain both spatial information and temporal information, meaning that it is not clear how to choose $ k $ in this case. This is illustrated in Section~\ref{sec:num_results}. For this reason, Algorithm~\ref{alg:mCCA_sc} requires $k$ to be chosen based on existing knowledge of the specific problem, and at this stage cannot be selected based on eigenvalues alone. Applying $ k $-means to $ \mathbf{C} $ then clusters each vertex in each time view, which we can visualize across the whole interval. This allows us to evaluate the performance of the algorithm on graphs without a ground truth clustering.

\section{Numerical Results} \label{sec:num_results}

We generate weighted time-evolving graphs by first constructing static graphs with heterogeneous node-degree distributions and community sizes as described in~\cite{fortunato_2009} using publicly available \href{https://github.com/andrealancichinetti/LFRbenchmarks}{C\texttt{++} code} (Package 4). We then modify and evolve the graphs as described below so that they exhibit the behavior we are interested in.

To analyze the capabilities of the spatio-temporal graph Laplacian, we show the eigenvectors of the matrix $ \mathbf{C} $ along with the dominant eigenvalues. We also visualize the cluster labels generated by applying $ k $-means for vertices across the entire time interval, and evaluate the clustering by computing the \emph{adjusted Rand index} (ARI) \cite{hubert_1985} on the first time view and the final time view only.

\subsection{Benchmark 1}

\begin{figure}
    \definecolor{matlab1}{RGB}{253, 231, 37}
    \definecolor{matlab2}{RGB}{33, 145, 140}
    \definecolor{matlab3}{RGB}{68, 1, 84}
    \newcommand{\cdash}[1]{\textcolor{#1}{\rule[0.5ex]{1em}{0.3ex}}}
    \centering
    \begin{minipage}[b]{\textwidth}
        \centering
        \subfiguretitle{(a)}
        \vspace{1ex}
        \includegraphics[width=\textwidth]{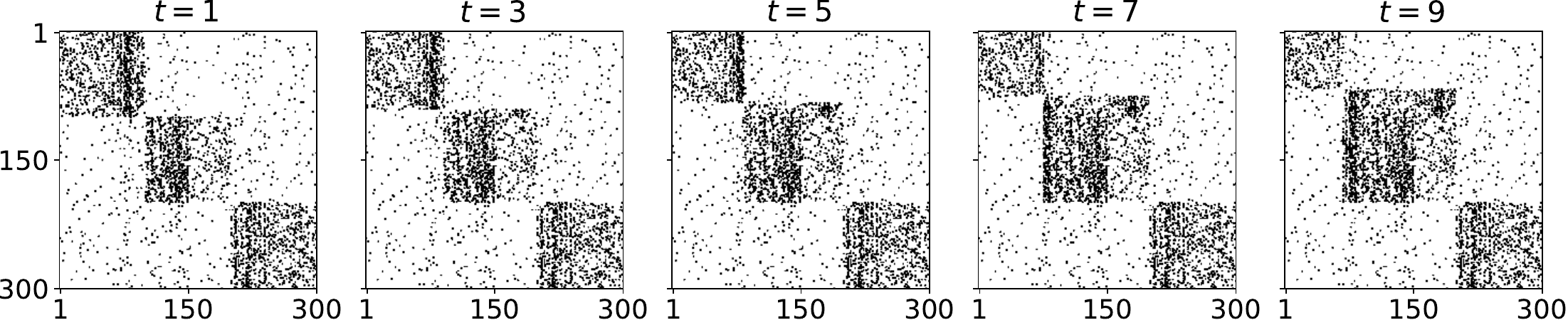}
    \end{minipage}
    \\[1.5ex]
    \begin{minipage}[b]{\textwidth}
        \centering
        \subfiguretitle{(b)}
        \includegraphics[width=\textwidth]{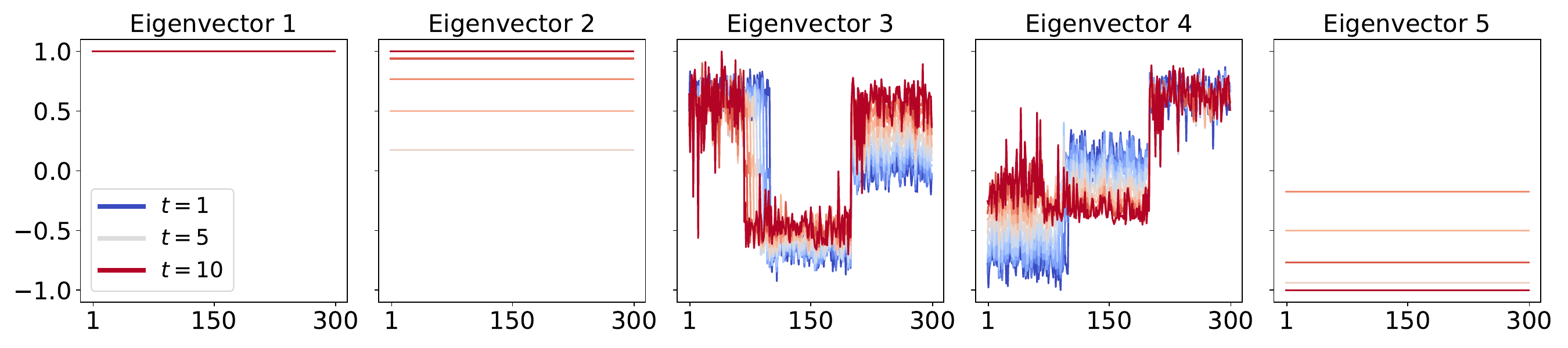}
    \end{minipage}
        \begin{minipage}[t]{0.317\textwidth}
        \centering
        \subfiguretitle{(c)}
        \vspace{0.3ex}
        \includegraphics[width=\textwidth]{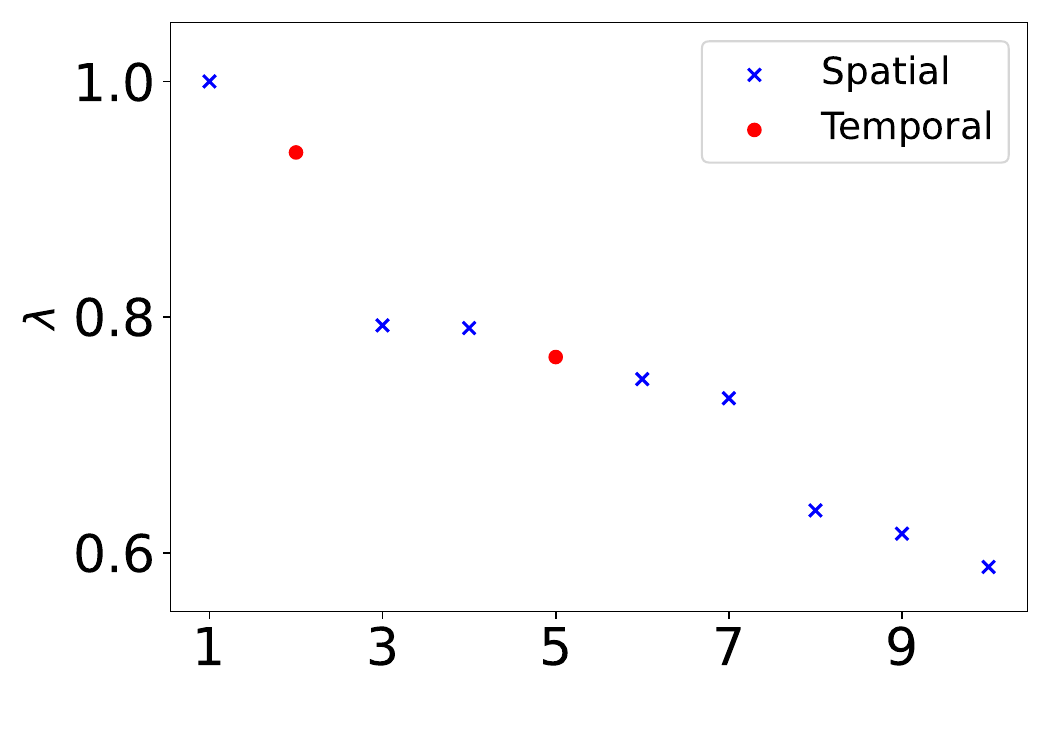}
    \end{minipage}
    \begin{minipage}[t]{0.31\textwidth}
        \centering
        \subfiguretitle{(d)}
        \includegraphics[width=\textwidth]{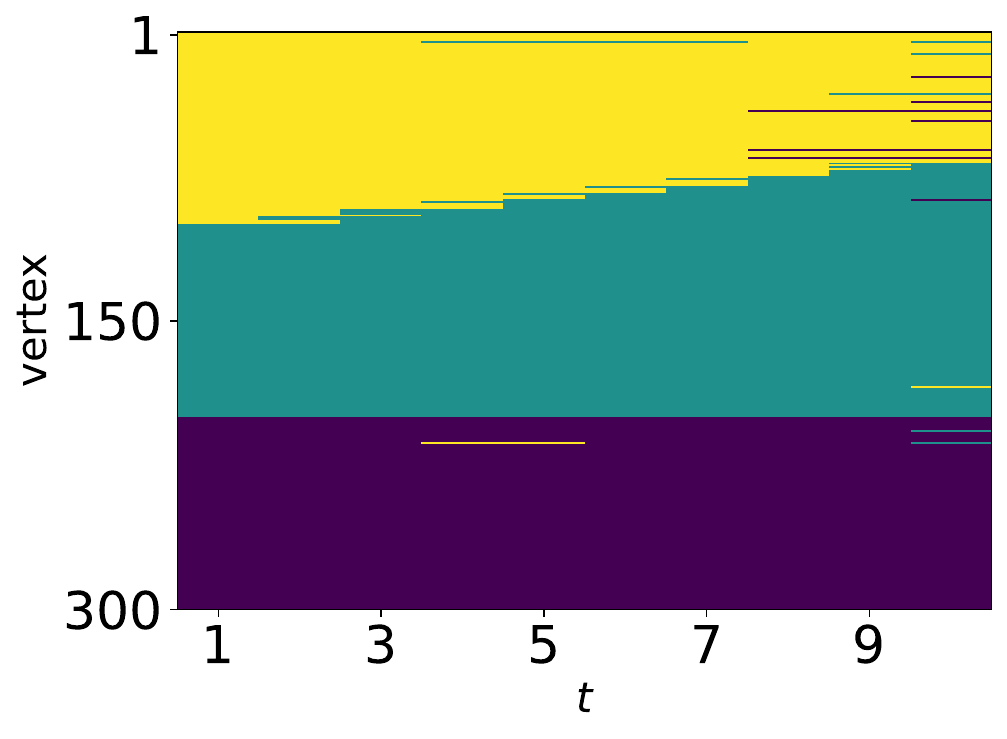}
    \end{minipage}
     \begin{minipage}[t]{0.31\textwidth}
        \centering
        \subfiguretitle{(e)}
        \includegraphics[width=\textwidth]{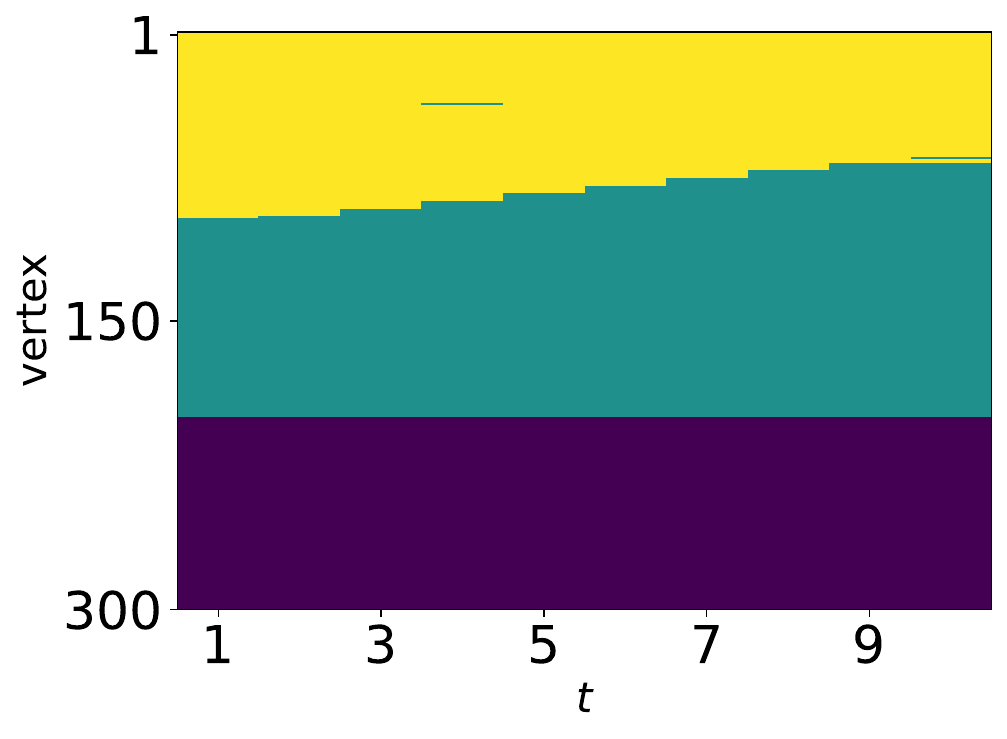}
    \end{minipage}
    \caption{(a) Adjacency matrix snapshots of time-evolving undirected benchmark graph at $ t = 1, 3, 5, 7, 9 $. The graph comprises 3 clusters of $ 100 $ vertices at $ t = 1 $, and over time the graph evolves so that the first cluster shrinks to $ 65 $ vertices and the second cluster grows to $135$ vertices. (b)~The first five dominant eigenvectors of the matrix $\mathbf{C}$ over time. (c)~Ten dominant eigenvalues of $ \mathbf{C} $, showing which eigenvalues encode temporal information and which encode spatial information. We show also the $ k $-means clustering of the graph with $ k = 3 $ clusters using the first three spatial eigenvectors of (d)~the spatio-temporal graph Laplacian and (e)~the supra-Laplacian. In both (d) and (e), \cdash{matlab1} denotes cluster~1, \cdash{matlab2} denotes cluster~2, and \cdash{matlab3} denotes cluster~3.}
    \label{fig:benchmark1_fig}
\end{figure}

Benchmark 1 is constructed by first generating a graph of $ n = 300 $ vertices, in three clusters of 100 vertices each. Then, we move a fraction of the vertices in cluster 1 to cluster 2 at each time step, so that at $t=10$ cluster 1 contains 65 vertices and cluster 2 has 135. Cluster 3 remains constant for the entire time period. This is illustrated in Figure~\ref{fig:benchmark1_fig}\ts(a). Figure~\ref{fig:benchmark1_fig}\ts(b) shows the dominant eigenvectors of the matrix $\mathbf{C}$ for this benchmark. Here, we ``fold'' the eigenvectors so that we display the $ M \times n $ eigenvector as $ M $ vectors of shape $ 1 \times n $. The first of these $ 1 \times n $ vectors gives information about the cluster structure at time $ t = 1 $, the second gives information about the structure at $ t = 2 $, and so on. As we go from $ t = 1 $ to $ t = M $, we plot the vectors from blue to red. It is clear from the eigenvectors that we observe the behavior exhibited by the graph; eigenvector 3 in particular shows the shrinking of cluster 1 as the level set for vertices 1--100 at $ t = 1 $ shrinks to a level set for only the first 65 of these vertices by $ t = 10 $. Similarly, eigenvector 3 also shows that cluster 2 grows. We note here that the first eigenvector is constant as expected, but eigenvectors 2 and 5 are constant within each time view, and change value from one time view to the next. This indicates an eigenvector that is encoding temporal information, rather than spatial. That is, applying a clustering algorithm using this eigenvector would group the vertices into their snapshots (i.e., a temporal clustering), not into their intra-layer spatial clusters. Figure~\ref{fig:benchmark1_fig}\ts(c) shows that these spatial and temporal eigenvalues do not yield a spectral gap as expected for standard spectral clustering methods.

Applying Algorithm~\ref{alg:mCCA_sc} with $ k = 3 $ to benchmark 1, we visualize cluster labels generated by $ k $-means in Figure~\ref{fig:benchmark1_fig}\ts(d) where we have clustered the graph using eigenvectors 1, 3 and 4 of $\mathbf{C}$ (i.e.,\ we filter out the temporal eigenvectors such as 2 and 5, but not the constant first eigenvector by convention). For comparison purposes, we also show the results of clustering the same graph using the supra-Laplacian with $ a = 0.3 $ in Figure~\ref{fig:benchmark1_fig}\ts(e), where we have also filtered out temporal eigenvectors. We see that the matrix associated with the spatio-temporal graph Laplacian is capable of achieving a good clustering over all time views, with some mislabeled vertices occurring at $ t =10 $. In contrast, we must tune the coupling parameter for the supra-Laplacian in order to achieve a similar result. Using a trial--and--error approach we are able to achieve a good clustering with $ a = 0.3 $; Table~\ref{tab:Benchmark graphs} shows that, with this parameter value, the supra-Laplacian outperforms Algorithm~\ref{alg:mCCA_sc} in terms of the ARI at $ t =10 $. However, we note that there is a significant computational cost to tuning the coupling parameter, and it is also not obvious against which metric this parameter should be tuned. For example, we may tune the parameter according to the ARI score at time views for which a ground truth is available, or by visual inspection of the cluster labels. In this work we tune the parameter by considering the ARI at $ t = 1 $ and $ t = 10 $ and by inspection of the intermediate labels.

\subsection{Benchmark 2}

\begin{figure}
    \definecolor{matlab4}{RGB}{253, 231, 37}
    \definecolor{matlab5}{RGB}{53, 183, 121}
    \definecolor{matlab6}{RGB}{49, 104, 142}
    \definecolor{matlab7}{RGB}{68, 1, 84}
    \newcommand{\cdash}[1]{\textcolor{#1}{\rule[0.5ex]{1em}{0.3ex}}}
    \centering
    \begin{minipage}[b]{\textwidth}
        \centering
        \subfiguretitle{(a)}
        \vspace{1ex}
        \includegraphics[width=\textwidth]{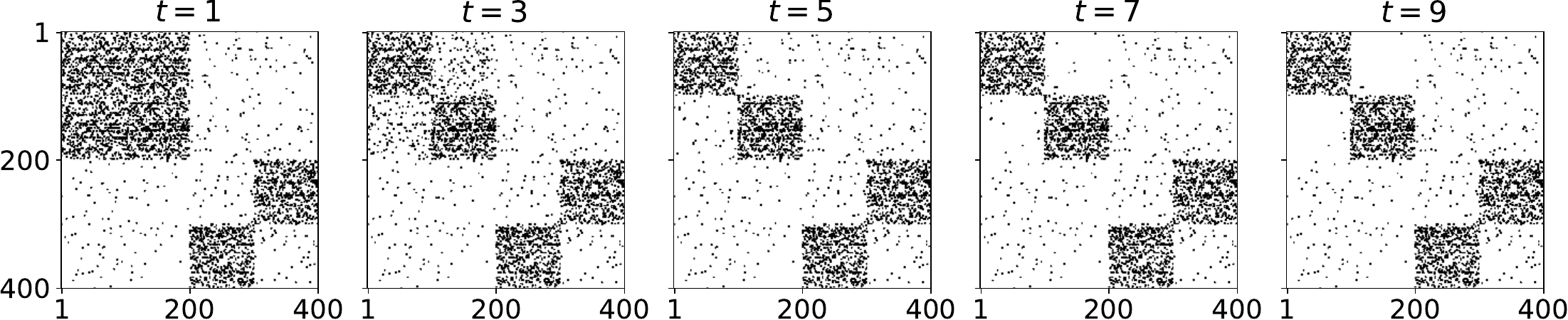}
    \end{minipage}
    \\[1.5ex]
    \begin{minipage}[b]{\textwidth}
        \centering
        \subfiguretitle{(b)}
        \vspace{0.2ex}
        \includegraphics[width=\textwidth]{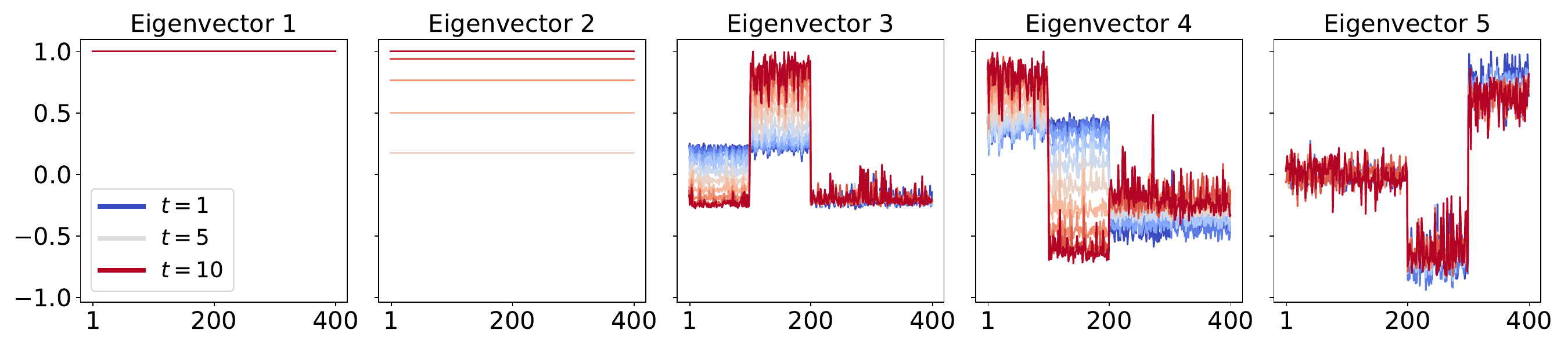}
    \end{minipage}
    \begin{minipage}[t]{0.322\textwidth}
        \centering
        \subfiguretitle{(c)}
        \vspace{0.2ex}
        \includegraphics[width=\textwidth]{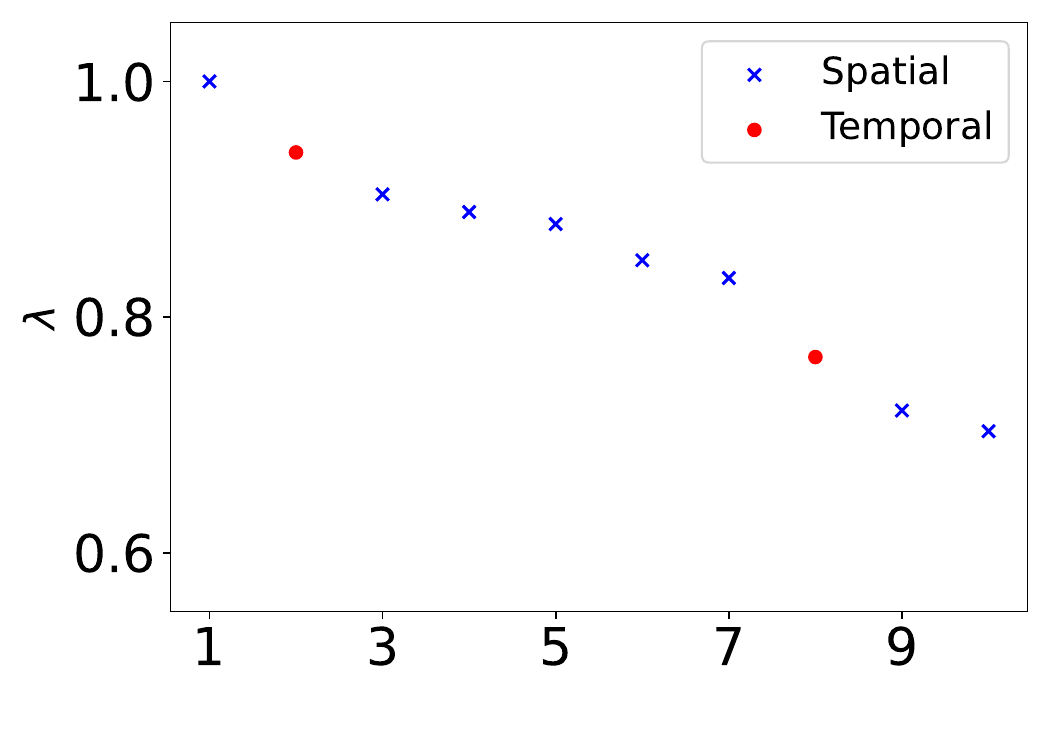}
    \end{minipage}
    \begin{minipage}[t]{0.31\textwidth}
        \centering
        \subfiguretitle{(d)}
        \includegraphics[width=\textwidth]{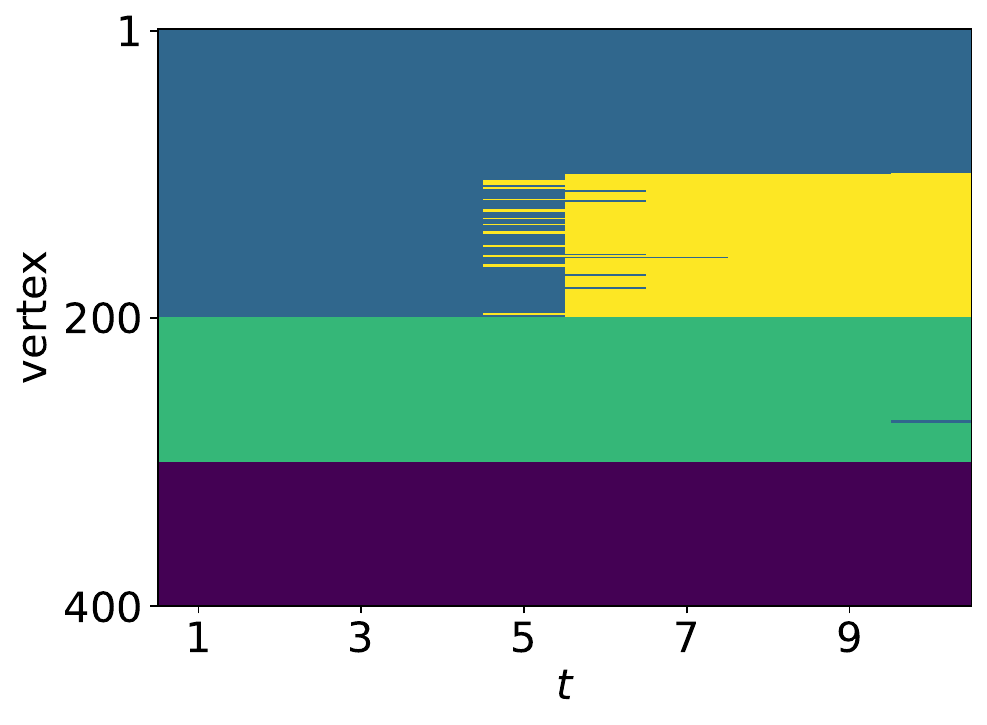}
    \end{minipage}
     \begin{minipage}[t]{0.31\textwidth}
        \centering
        \subfiguretitle{(e)}
        \includegraphics[width=\textwidth]{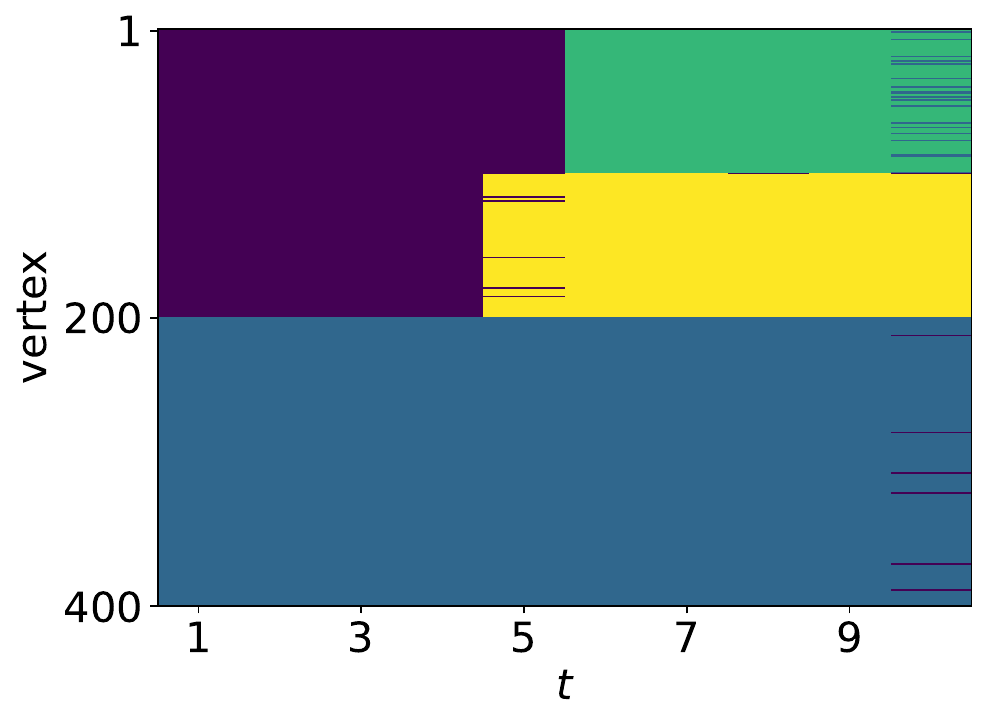}
    \end{minipage}

    \caption{(a) Adjacency matrix snapshots of directed benchmark graph at $ t = 1, 3, 5, 7, 9 $. The graph comprises 3 clusters at $ t = 1 $, where the first cluster has $ 200 $ vertices and the other 2 clusters each have $ 100 $. Over time the graph evolves so that the first cluster splits into two clusters of equal size, so that at $ t = 10 $ each cluster has $ 100 $ vertices. (b) Five dominant eigenvectors of the matrix $ \mathbf{C} $ over time. (c)~Ten dominant eigenvalues of $ \mathbf{C} $, showing which eigenvalues encode temporal information and which encode spatial information. We also show the $ k $-means clustering of the graph with $ k = 3 $ clusters using the first three spatial eigenvectors of (d) the spatio-temporal graph Laplacian and (e) the supra-Laplacian. In both, vertices colored by \cdash{matlab4} are labeled as cluster 1, whilst \cdash{matlab5} denotes cluster~2, \cdash{matlab6} denotes cluster~3, and \cdash{matlab7} denotes cluster~4.}
    \label{fig:benchmark2_fig}
\end{figure}

For the second benchmark, we generate a directed graph comprising 400 vertices and initially three clusters, with a mix of both off- and on-diagonal dense blocks (see~\cite{klus_2024}). We split the largest cluster by removing edges with probability $ p =0.5 $ at each time step over $ t =10 $ views. The adjacency matrix of this graphs at a number of time views is shown in Figure~\ref{fig:benchmark2_fig}\ts(a). We conduct a similar analysis for benchmark 2. Again, it is clear from Figure~\ref{fig:benchmark2_fig}\ts(b) that the eigenvectors of $ \mathbf{C} $ detect the changing cluster structure of the graph, where the largest cluster of $ 200 $ vertices splits into two smaller clusters of equal size. We highlight the spatial and temporal eigenvalues in Figure~\ref{fig:benchmark2_fig}\ts(c), and then apply Algorithm~\ref{alg:mCCA_sc} with $ k = 4 $. Figure~\ref{fig:benchmark2_fig} also shows the cluster labels for vertices over all time views for both Algorithm~\ref{alg:mCCA_sc}, in (d), and also for the supra-Laplacian with $ a = 0.05 $, in (e). Again, we choose $ a $ based on a trial--and--error approach to obtain the cluster labels closest to the true behavior of the graph, and for both approaches we filter out temporal eigenvectors and choose the first $ k $ spatial eigenvectors to cluster.

We note that the eigenvalues of the supra-Laplacian for directed graphs are not necessarily real-valued, and so spectral clustering will fail. In order to use the supra-Laplacian on benchmark 2, we remove directionality information and symmetrize the graph so that it is undirected. We therefore do not expect the supra-Laplacian to be able to detect off-diagonal clusters, like the final two clusters in Figure~\ref{fig:benchmark2_fig}\ts(a). Indeed, this is observed in Figure~\ref{fig:benchmark2_fig}\ts(e), where the $ k $-means labeling fails to identify those two clusters and instead groups them into a single cluster, labeled in Figure~\ref{fig:benchmark2_fig}\ts(e) as cluster 3 for the duration of the time interval. In comparison, the labels generated by Algorithm~\ref{alg:mCCA_sc} correctly identify these off-diagonal blocks and label them as clusters 2 and 4, and also identifies the split in the initial large cluster (labeled as cluster 3 at $ t = 1 $) into two smaller clusters at approximately $ t = 4 $ where half of the vertices retain the label of cluster~3, and half are labeled as cluster~1. We note that the supra-Laplacian approach also detects this splitting cluster, but identifies two new clusters, cluster 1 and 2, emerging at $ t = 4 $ and $ t = 5 $, rather than preserving the original label of cluster 4 of the largest cluster before the split.

\begin{remark}
    In this benchmark graph, we chose $k=4$ since we know that there are $4$ clusters in the graph. However, two of the clusters are off-diagonal and so would be considered as a single cluster by the supra-Laplacian approach as it cannot detect such clusters. For comparative purposes, we also compared the spatio-temporal graph Laplacian with the supra-Laplacian for $k=3$ and $k=5$. The full results are omitted for brevity but in the case $k=3$, the spatio-temporal approach fails to detect the splitting of the first cluster but correctly identifies the off-diagonal blocks, whilst the supra-Laplacian cannot detect the off-diagonal clusters and also mislabels a significant number of vertices after the split. For $k=5$, the spatio-temporal approach is very similar to $k=4$ but generates new labels for both clusters that result from the largest cluster splitting. The supra-Laplacian again suffers from significant mislabeling, and clusters swap labels multiple times. We note that we again tuned the coupling parameter using trial-and-error to achieve the best possible labeling.
\end{remark}

\subsection{Double Gyre Application}
\label{sec:double_gyre}

\begin{figure}
    \centering
    \includegraphics[width=0.8\textwidth]{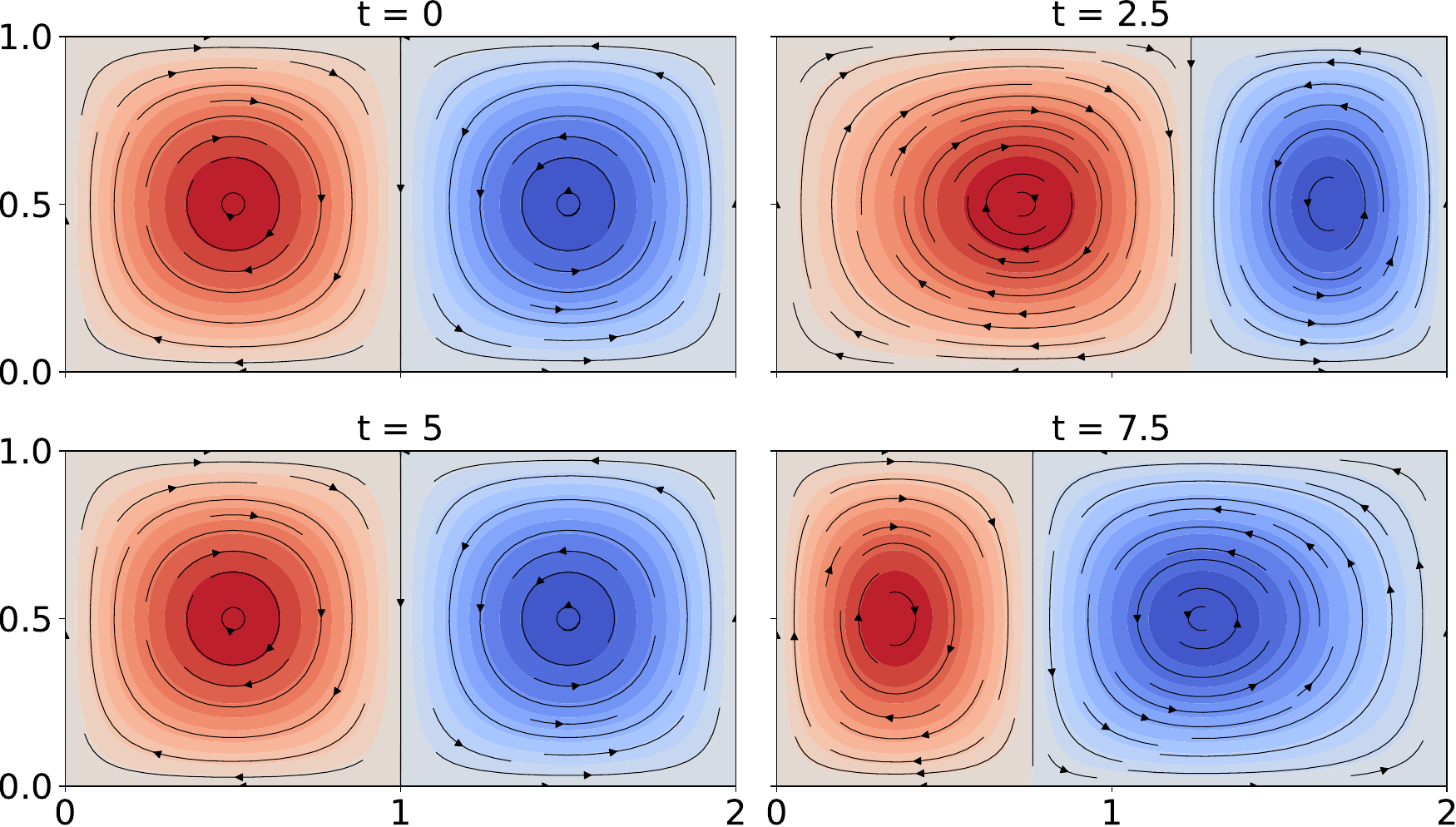}
    \caption{Stream lines of the time-dependent double gyre system at times $t=0$, $t=2.5$, $t=5$, and $t=7.5$. The boundary separating the gyres oscillates with amplitude $\epsilon = 0.25$ and period of oscillation $10$, so that at $t=0, 5, 10$ the boundary is located at $x=1$, and moves to $x=1.25$ at $t=2.5$ and to $x=0.75$ at $t=7.5$.}
    \label{fig:double_gyre_streamlines}
\end{figure}

In order to illustrate the performance of the spatio-temporal spectral clustering on real world data, we consider the time-dependent double gyre system, which can be regarded as a simplification of patterns that are often found in geophysical flows \cite{Coulliette_2000, Coulliette_2007}. This system is described by the equations
\begin{equation}
    \begin{split}
        \dot x &= - \pi A \sin \left(\pi f(x,t) \right) \cos\left(\pi y\right), \\
        \dot y &= \pi A \cos\left(\pi f(x, t )\right) \sin\left(\pi y \right) \frac{df}{dx},
    \end{split}
\end{equation}
with $ f(x,t) = \epsilon \sin(\omega t) x^2 + \left(1-2 \ts \epsilon\sin(\omega t) \right) x $, over the state space $ \mathbb{X} = [0, 2] \times [0, 1] $, see also~\cite{Forgoston_2011}. The system at times $ t = 0, 2.5, 5, 7.5 $ is shown in Figure \ref{fig:double_gyre_streamlines}. For this work, we follow Example 1 in \cite{Shadden_2005} and choose $ A = 0.1 $, $ \omega = \frac{2\pi}{10} $, and $\epsilon=0.25$. With these parameters, the system is time-dependent (as $\epsilon \neq 0$) and the boundary oscillates with period $10$, with an amplitude of oscillation of $0.25$ about the point $(1,0)$.

\begin{figure}
    \definecolor{matlab8}{RGB}{253, 231, 37}
    \definecolor{matlab9}{RGB}{68, 1, 84}
    \newcommand{\cdash}[1]{\textcolor{#1}{\rule[0.5ex]{1em}{0.3ex}}}
    \centering
    \begin{minipage}[b]{0.8\textwidth}
        \centering
        \subfiguretitle{(a)}
        \vspace{1ex}
        \includegraphics[width=\textwidth]{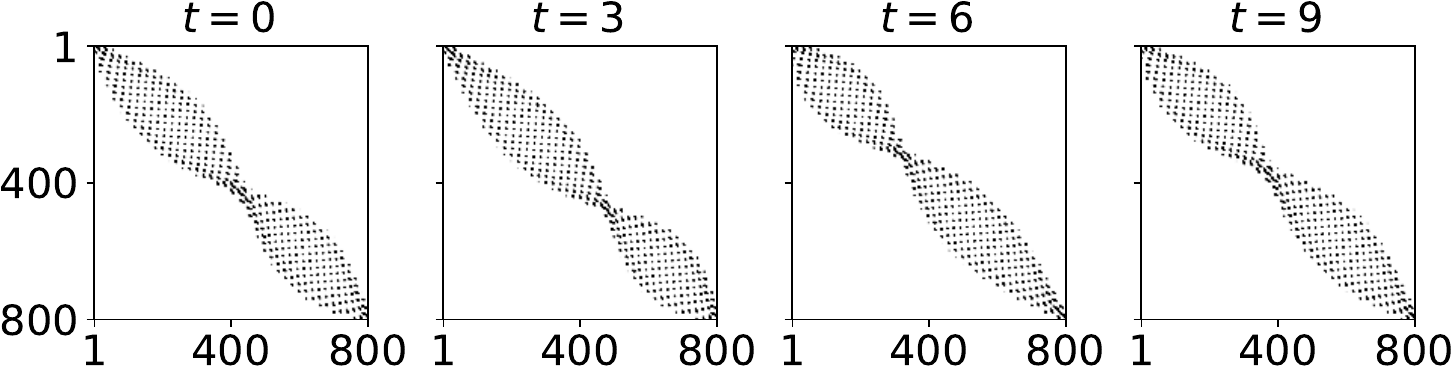}
    \end{minipage}
    \\[1.5ex]
    \begin{minipage}[t]{0.49\textwidth}
        \centering
        \subfiguretitle{(b)}
        \vspace{0.5ex}
        \includegraphics[height=3.7cm]{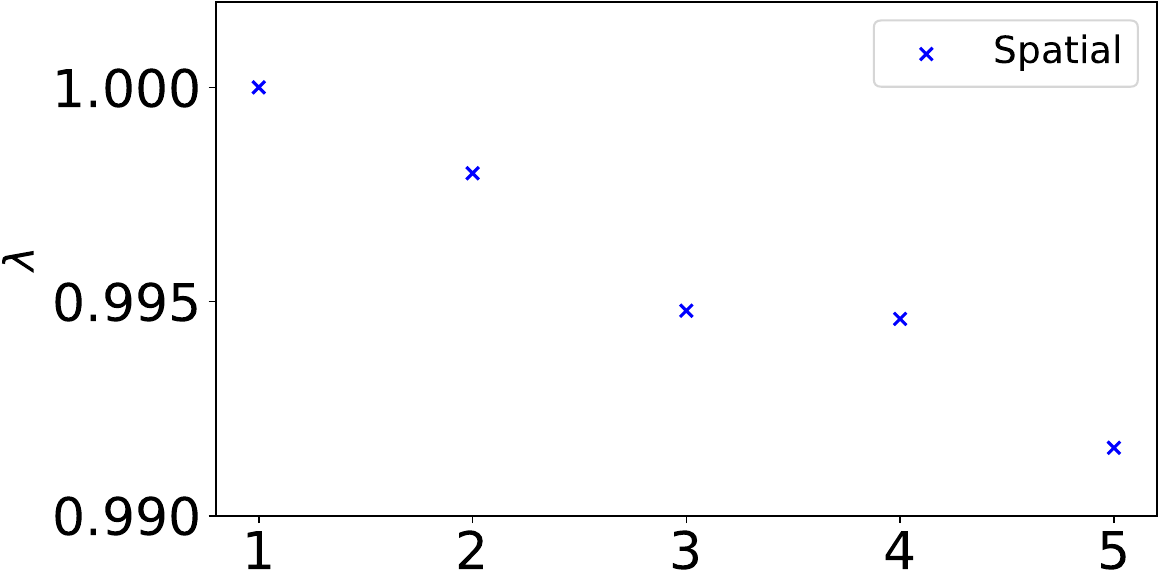}
    \end{minipage}
    \begin{minipage}[t]{0.49\textwidth}
        \centering
        \subfiguretitle{(c)}
        \includegraphics[height=3.73cm]{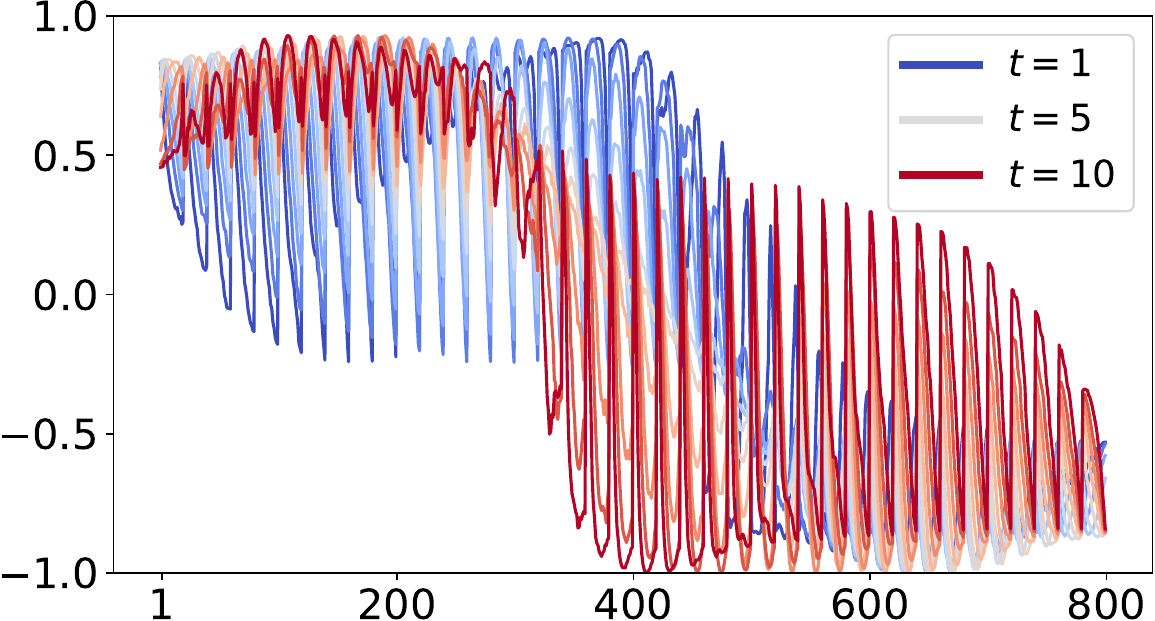}
    \end{minipage}
    \\[1.5ex]
    \begin{minipage}[b]{\textwidth}
        \centering
        \subfiguretitle{(d)}
        \vspace{1ex}
        \includegraphics[width=\textwidth]{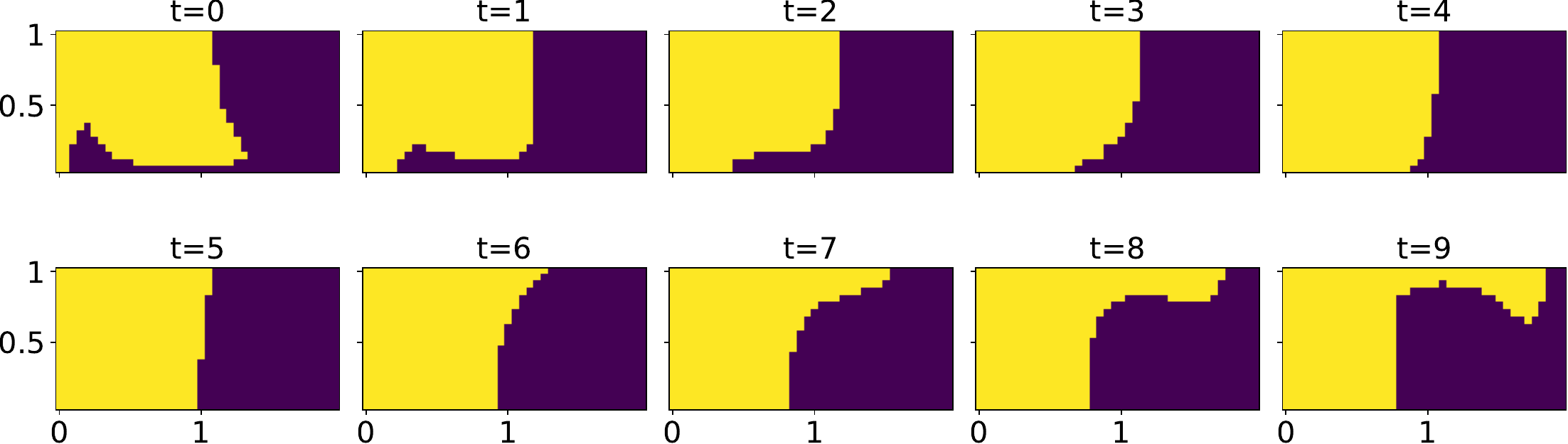}
    \end{minipage}
    \caption{(a) Transition matrices computed using transition probabilities estimated from time-dependent double gyre trajectory data, shown at times $t=0, 3, 6, 9$. (b) Eigenvalues of the matrix $\mathbf{C}$ for the time-evolving graph constructed using the transition matrices over the time interval $M=10$, with snapshots at $t=0, 1, \dots, 9$. A spectral gap between the second and third eigenvalues is observed. (c) The second eigenvector of $\mathbf{C}$, showing the changing size over time. (d) Cluster labels for each box in the discretized state space, where \cdash{matlab8} indicates cluster 1 (the left gyre) and \cdash{matlab9} indicates cluster 2 (the right gyre). The labels of each node are shown at each time view, and the expected dynamics of the double gyre system are observed in this labeling.}
    \label{fig:double_gyre_fig}
\end{figure}

We coarse-grain $\mathbb{X}$ into $40 \times 20$ boxes of equal size, and for each view $t=0, 1, \dots, 9$, we initialize 50 particles in each box and record the box index of each particle at time $t$ and at $t+1$. We can then define a transition matrix $S_t$ at time $t$ by counting the number of particles that begin in box $i$ and end in box $j$ in a matrix, and then dividing by the row sum to ensure that the matrix is row-stochastic. This approach is called \emph{Ulam's method}~\cite{Ulam60} and such box discretizations have, for instance, been used to compute almost invariant or metastable sets, see, e.g.,~\cite{DJ99, SH03}. Extensions to non-autonomous systems can also be found in \cite{FP14, fackeldey_2019}. We view the matrices $S_t$ as the transition matrices of a graph at a number of snapshots and consider the center of each box as a vertex. This results in a time-evolving graph on which we can apply the spatio-temporal spectral clustering. The transition matrices of this graph at times $t=0,3,6,9$ are shown in Figure~\ref{fig:double_gyre_fig}\ts(a).

In Figure~\ref{fig:double_gyre_fig}\ts(b) we can see the first five eigenvalues of the matrix $ \mathbf{C} $ for this time-evolving graph, and we note that in this case the first five are all spatial eigenvalues. That is, the dominant five eigenvalues and eigenvectors encode only spatial information. We also see that the moving boundary between the clusters is identified in the second eigenvector in Figure~\ref{fig:double_gyre_fig}\ts(c). This shows that the clustering algorithm can detect the growing and shrinking of the gyres as the system evolves. We also show, in Figure \ref{fig:double_gyre_fig}\ts(d), the cluster labels at each time step. Again, the expected behavior of the gyres is clearly seen in this visualization, indicating that the algorithm is capable of effectively clustering this simplified real-world flow.

\subsection{Summary of Results}

In both benchmark 1 and 2, we note that the presence of spatial and temporal eigenvalues does not allow us to choose $ k $ according to a spectral gap, and choosing the number of clusters for $ k $-means requires problem-specific knowledge or a trial--and--error approach. We use $ k = 3 $ for benchmark 1 and $ k = 4 $ for benchmark 2. The results of these benchmark comparisons are summarized in Table~\ref{tab:Benchmark graphs}. It is clear here that the spatio-temporal graph Laplacian approach performs very well on the benchmarks without any parameter tuning and is capable of clustering directed graphs effectively. Comparatively, spectral clustering using the supra-Laplacian can achieve very good results when tuned appropriately, but cannot detect off-diagonal directed clusters without a preprocessing step to facilitate this.

\begin{table}
    \caption{Comparison of Algorithm~\ref{alg:mCCA_sc} and the supra-Laplacian spectral clustering for benchmark graphs. For each method and graph, we compute the adjusted Rand index for the first and last time view. }
    \label{tab:Benchmark graphs}
    \scalebox{0.87}{
    \renewcommand{\arraystretch}{1.2}
    \newcolumntype{x}[1]{>{\centering\arraybackslash\hspace{0pt}}p{#1}}
    \begin{tabular}{|x{3cm}|x{2.8cm}|x{2.8cm}|x{2.8cm}|x{2.8cm}|x{2.8cm}|x{2.8cm}|}
        \hline
          & \multicolumn{2}{c|}{spatio-temporal graph Laplacian ARI}
          & \multicolumn{2}{c|}{supra-Laplacian ARI} \\ \hline
         & View 1 & View 10 & View 1 & View 10 \\ \hline
        benchmark 1 & 1.0 & 0.866 & 0.961 & 0.971 \\
        benchmark 2 & 1.0 & 1.0  & 0.749 & 0.565 \\ \hline
    \end{tabular}}
\end{table}

We also note that Section \ref{sec:double_gyre} illustrates the applicability of the spatio-temporal spectral clustering algorithm on data derived from real-world dynamical systems. We have shown that the algorithm can identify the oscillations observed in a time-dependent double gyre system, and there is no parameter tuning required to achieve these results.

\section{Conclusion} \label{sec:discussion}
We have proposed a novel framework for detecting and analyzing changing cluster structure in time-evolving graphs, and described connections to existing spectral clustering methods. Our approach extends canonical correlation analysis to maximize correlations across multiple time views, which leads to the definition of the spatio-temporal graph Laplacian. Via an analysis of the spectral properties of this Laplacian we can capture the temporal evolution of the graph. The results obtained from experiments on benchmark graphs and simplified real-world problems demonstrate the capabilities of the spatio-temporal graph Laplacian for detecting clusters that change over time, such as splitting, merging, and changing in size. The spatio-temporal graph Laplacian offers two significant advantages over other approaches, such as the supra-Laplacian. Firstly, the spatio-temporal graph Laplacian does not have any parameters which must be tuned, which can be computationally expensive and not straightforward. Secondly, the spatio-temporal graph Laplacian can naturally handle directed graphs without symmetrization and can detect off-diagonal clusters, expanding the range of applications for which it can be utilized.

However, there are still challenges and open problems to address. One issue is the selection of the number of eigenvalues and the number of clusters to choose in Algorithm~\ref{alg:mCCA_sc}. For typical spectral clustering algorithms, these numbers would be equal; we would cluster $ k $ eigenvectors to detect $ k $ clusters in a static graph. Determining the optimal value for each of these in time-evolving graphs is a well-known challenge, as we must also consider which eigenvalues correspond to spatial information and which correspond to temporal information. Further work is required to fully understand this, but first steps towards this for the supra-Laplacian can be seen in~\cite{gomez_2013}, where the asymptotic cases for the coupling strength are considered. A similar analysis for our method could yield key results for the spatio-temporal graph Laplacian method.

Another problem is a lack of well-established time-evolving benchmark graphs with ground truth clusterings, which means that we must use different metrics to evaluate the clustering algorithms on these graphs compared with static graphs. This issue is discussed in \cite{rossetti_2018}, and it is suggested that methods can be evaluated based on effectiveness when applied to real-world case studies, as investigated in Section \ref{sec:double_gyre}. An analysis of the performance of our algorithm on other datasets derived from fluid dynamics problems is a natural extension of this work, for example to identify merging or splitting vortices in fluid flows.

We also note that our formulation of mCCA considers only adjacent time views, but many other formulations have been proposed~\cite{cristianini_2004, kettenring_1971}. An exploration and comparison of these formulations could be of interest, and in particular we suggest that different formulations may yield better results for specific problems.

\section*{Acknowledgments}

M.\ T.\ was supported by the EPSRC Centre for Doctoral Training in Mathematical Modelling, Analysis and Computation (MAC-MIGS) funded by the UK Engineering and Physical Sciences Research Council (grant EP/S023291/1), Heriot--Watt University and the University of Edinburgh. N.\ Dj.\ C. is partially funded by the Deutsche Forschungsgemeinschaft (DFG) under Germany’s Excellence Strategy through grant EXC-2046 The Berlin Mathematics Research Center MATH+ (project no.\ 390685689).

\section*{Author Declarations and Data Availability}

The authors have no conflicts to disclose. The data that support the findings of this study are available from the corresponding author upon request.

\bibliographystyle{unsrturl}
\bibliography{SGL}

\end{document}